\newcommand{\R}{\mathbb{R}}
\newcommand{\T}{\mathbb{T}}
\newcommand{\Sf}{\mathbb{S}^2}
\newcommand{\dd}{\mathrm{d}}            %{\mathrm{d}}
\newcommand{\ac}{\'}
\newcommand{\init}{\textrm{in}}
\newcommand{\Ker}{\mathrm{Ker\hspace{0.5mm}}}
\newcommand{\Span}{\textrm{Span}}
\newcommand{\loc}{\textrm{loc}}
\newcommand{\boldF}{\mathbf{F}}
\newcommand{\boldf}{\mathbf{f}}
\newcommand{\boldL}{\mathbf{L}}
\newcommand{\boldM}{\pmb{\mu}}
\newcommand{\epsM}{M^{\varepsilon}}
\newcommand{\epsL}{L^{\varepsilon}}
\newcommand{\spaceLMR}{L^2\big(\R^3,\pmb{\mu}^{-1/2}\big)}
\newcommand{\spaceLM}{L_v^2\big(\pmb{\mu}^{-1/2}\big)}
\newcommand{\spaceLvM}{L_v^2\big(\langle v\rangle^{\gamma/2}\pmb{\mu}^{-1/2}\big)}
\newcommand{\specialsection}{%
  \titleformat{\section}[block]{\center\scshape\large}{\thesection.}{0.5em}{}
}
\newcommand{\addperiod}[1]{#1.}
\newcommand{\specialsubsection}{%
 \titleformat{\subsection}[runin]{\normalfont\large\bfseries}{\normalfont\thesubsection.}{0.5em}{\addperiod}
}
\newenvironment{abs}{\noindent\small\scshape{Abstract.}\upshape}
\newtheorem{teor}{Theorem}[section]
\newtheorem{lemm}[teor]{Lemma}
\newtheorem{propos}[teor]{Proposition}
\newtheorem{propos*}{Proposition}
\newtheorem{corol}[teor]{Corollary}
\newtheorem{oss}[teor]{Remark} %osservazioni in corsivo
\def\blfootnote{\xdef\@thefnmark{}\@footnotetext}
\begin{document}

%\thispagestyle{empty}

%\footnote[num]{text}

\begingroup  
  \centering
  \textbf{\scshape{STABILITY OF THE SPECTRAL GAP\\FOR THE BOLTZMANN MULTI-SPECIES OPERATOR LINEARIZED\\ AROUND NON-EQUILIBRIUM MAXWELL DISTRIBUTIONS}}\\[1.5em]
   \footnotesize\scshape{ANDREA BONDESAN, LAURENT BOUDIN, MARC BRIANT, AND B\ac{E}R\ac{E}NICE GREC}\\[1.5em]

\endgroup

\begin{abs}
We consider the Boltzmann operator for mixtures with cutoff Maxwellian, hard potentials, or hard spheres collision kernels. In a perturbative regime around the global Maxwellian equilibrium, the linearized Boltzmann multi-species operator $\boldL$ is known to possess an explicit spectral gap $\lambda_{\boldL}$, in the global equilibrium weighted $L^2$ space. We study a new operator $\bold{\epsL}$ obtained by linearizing the Boltzmann operator for mixtures around local Maxwellian distributions, where all the species evolve with different small macroscopic velocities of order $\varepsilon$, $\varepsilon >0$. This is a non-equilibrium state for the mixture. We establish a quasi-stability property for the Dirichlet form of $\bold{\epsL}$ in the global equilibrium weighted $L^2$ space. More precisely, we consider the explicit upper bound that has been proved for the entropy production functional associated to $\boldL$ and we show that the same estimate holds for the entropy production functional associated to $\mathbf{\epsL}$, up to a correction of order $\varepsilon$.

\iffalse
We prove a quasi-stability property for the hypocoercivity estimates in a perturbative setting around local Maxwellians $\epsM_i$ with small macroscopic velocities $\varepsilon u_i$, meaning that the Boltzmann operator linearized around these local densities has an explicit spectral gap $\lambda_{\boldL} -C\varepsilon$  in the same space $\spaceLM$, where $C$ is a constant depending on the quantities $u_i$ but not on the small parameter $\varepsilon$.

in a perturbative setting around local Maxwellians $\epsM_i$ with small macroscopic velocities of order $\varepsilon$, that can be different for each species of the mixture. We prove that the associated linearized multi-species operator, which depends on the local densities $\epsM_i$, has an explicit spectral gap in the space $\spaceLM$ (where $\boldM$ is the global Maxwellian equilibrium of the mixture) 

a stability property for the coercivity estimates of the associated linearized multi-species operator (which depends on local Maxwellian densities), showing that it has an explicit spectral gap that varies as a function of the parameter $\varepsilon$ in the space $\spaceLM$, where $\boldM$ is the global Maxwellian equilibrium.
\fi

\end{abs}

\specialsection
\specialsubsection

%$$$$$$$$$$$$$$$$$$$$$$$$$$$$$$$$$$$$$$$$$$  SECTION 1 $$$$$$$$$$$$$$$$$$$$$$$$$$$$$$$$$$$$$$$$$$$$$%

\section{Introduction}

\blfootnote{\textit{Date:} November 15, 2018 .}
\blfootnote{This work was partially funded by the French ANR-13-BS01-0004 project Kibord headed by L. Desvillettes. The
first and fourth authors have been partially funded by Universit\'e Sorbonne Paris Cit\'e, in the framework of the ``Investissements
d'Avenir'', convention ANR-11-IDEX-0005.}
The study of the convergence to the equilibrium for the Boltzmann equation has a long history (see the reviews \cite{Cer, CerIllPul, Vil, UkaYan}) and is at the core of the kinetic theory of gases. In the mono-species setting, the Boltzmann equation models the behaviour of a dilute gas composed of a large number of identical and monatomic particles which are supposed to only interact via microscopic binary collisions. It describes the time evolution of the gas distribution function $F=F(t,x,v)$ and reads, on $\R_+\times\T^3\times\R^3$,
\begin{equation}\label{MonospeciesBoltzmann}
\partial_t F +v\cdot\nabla_x F =Q(F,F).
\end{equation}

Without going into any details on the structure of the Boltzmann operator $Q$ for the moment, let us just mention that it is a quadratic operator which contains all the information about the collision process, prescribing how the microscopic interactions between particles take place. In particular, one of its main features is that it satisfies the celebrated $H$-theorem, which states that the entropy functional $H(F)=\int_{\R^3}F\log F\dd v$ decreases with respect to time and the gas reaches an equilibrium state whenever $Q(F,F)=0$, namely when $F$ has the form of a local Maxwellian distribution
\[
M(t,x,v)=\frac{c(t,x)}{\big(2\pi K_B T(t,x)\big)^{3/2}}\exp\left\{-\frac{|v-u(t,x)|^2}{2 K_B T(t,x)}\right\},\hspace{0.5cm} (t,x,v)\in\R_+\times\T^3\times\R^3,
\]
where $K_B$ is the Boltzmann constant.
Consequently, taking into account the influence of the transport operator in the left-hand side of (\ref{MonospeciesBoltzmann}), the $H$-theorem suggests that, for large time asymptotics, the solutions of the Boltzmann equation should converge towards a global equilibrium having the form of a uniform (in time and space) Maxwellian distribution
$M_{\infty}(v)=c_{\infty}/(2\pi K_B T_{\infty})^{3/2}\exp\left\{-\frac{|v-u_{\infty}|^2}{2 K_B T_{\infty}}\right\}$, with $c_{\infty}, T_{\infty}>0$ and $u_{\infty}\in\R^3$. Here, the quantities $c_{\infty}$, $u_{\infty}$ and $T_{\infty}$ respectively represent the number of particles, macroscopic velocity and temperature of the gas, and their values are prescribed by the initial datum. For the sake of simplicity, up to a translation and a dilation of the coordinate system, it is always possible to choose $M_{\infty}$ as being the global Maxwellian distribution 
\[
\mu(v)=\frac{c_{\infty}}{(2\pi)^{3/2}}e^{-\frac{|v|^2}{2}},\hspace{0.5cm} v\in\R^3.
\]

Starting from an initial datum close enough to $\mu$, one could expect that the solutions to the Boltzmann equation remain in a neighbourhood of the equilibrium and eventually converge to it in the large time asymptotic. A common strategy to tackle the problem of convergence towards the equilibrium is then to perform studies near the  global Maxwellian distribution, by investigating the behaviour of a small perturbation $f$ around $\mu$.

This theory was initiated by Hilbert \cite{Hil} and later on further developed by Carleman \cite{Car} and Grad \cite{Gra1}. In the perturbative regime around the global equilibrium $\mu$, the Boltzmann operator $Q$ can be rewritten as $Q(F,F)=\{Q(\mu,f)+Q(f,\mu)\}+Q(f,f)$, since $Q(\mu,\mu)=0$. Because $f$ is supposed to remain small, the dominant term in the previous relation is the so-called linearized Boltzmann operator
\[
L(f)=Q(\mu,f)+Q(f,\mu).
\]
The question which naturally arises concerns its spectral properties. In fact, as done in the nonlinear case for $Q$, it is possible to prove a linearized version of the $H$-theorem which asserts that the so-called entropy production functional (or Dirichlet form) $D(f)=\int_{\R^3}L(f)f\mu^{-1}\dd v$ is nonpositive and satisfies in particular the following fundamental coercivity estimate \cite{Car,Gra1,Gra3,Cer,CerIllPul,BarMou}:
\begin{equation}\label{SpectralMonospecies}
D(f)\leq -\lambda_L\Vert f-\pi_L(f)\Vert^2_{L_v^2(\mu^{-1/2})},
\end{equation}
for any $f\in L^2(\R^3,\mu^{-1/2})=\Big\{f : \R^3\to\R\ \mathrm{measurable}:\Vert f\Vert_{L_v^2(\mu^{-1/2})}=\int_{\R^3}f^2\mu^{-1}\dd v < +\infty\Big\}$. Here, $\pi_L$ is the orthogonal projection onto $\Ker L$ (the space of the equilibrium states of $L$), and $\lambda_L>0$ can be taken to be the smallest nonzero eigenvalue of $-L$, i.e. the spectral gap of the linearized Boltzmann operator. 

Consequently, in a purely linear setting where $\partial_t f +v\cdot\nabla_x f =L(f)$, the above inequality allows to formally deduce that the associated linearized entropy functional $\int_{\R^3}f^2\mu^{-1}\dd v$ is monotonically decreasing with respect to time, and the gas reaches an equilibrium state whenever $L(f)=0$. Proving that $L$ satisfies (\ref{SpectralMonospecies}) provides a powerful tool in estimating the asymptotic behaviour of solutions of the Boltzmann equation close to the equilibrium, since it allows to derive a quantitative information about the rate of relaxation of $F$ towards $\mu$.

The existence of a spectral gap $\lambda_L$ for $L$ has been known for long and many results have been proved in this regard over the past sixty years, starting with Carleman and Grad themselves \cite{Car, Gra1, Gra3} in the case of Maxwellian, hard potentials and hard spheres collision kernels with cutoff. The idea behind these methods is to apply Weyl's theorem to $L$, written as a compact perturbation of a multiplicative operator, and they are thus non-constructive. Consequently, a major inconvenience of this strategy is that it does not provide any information about the width of the spectral gap, or its dependence on the initial datum and on the physical quantities appearing in the problem (in particular its sensitivity to perturbations of the collision kernel). 

The derivation of explicit estimates for the spectral gap of the linearized Boltzmann operator is fundamental to construct a quantitative theory near the equilibrium and to obtain explicit convergence rates to the equilibrium. In fact, when proving that solutions to the Boltzmann equation in the perturbative regime have an exponential trend of relaxation of order $\mathcal{O}\big(e^{-t/\tau}\big)$ towards the equilibrium \cite{Mou2,MouNeu}, one can use the information provided by (\ref{SpectralMonospecies}) in order to link the rate $1/\tau$ to the spectral gap $\lambda_L$ and recover an explicit convergence speed. This feature has been of great importance in the kinetic theory of gases, starting from the studies of Boltzmann himself. Indeed, the validity of the Boltzmann equation breaks for very large time (see the discussion in \cite[Chapter 1, Section 2.3]{Vil}). It is therefore crucial to obtain constructive quantitative informations on the time scale of the convergence, in order to show that this time scale is much smaller than the time scale of validity of the model.

\iffalse  %%%%%%%%%%%%%%%%%%%%%%%%%%%%%%%%%%%%%%%%%%%%%%%%%
For example, it is possible to show that in the cases of physical interest like hard potentials and hard spheres cutoff models, the spectral gap $\lambda_L$ of $L$ is directly related to the exponential rate of relaxation of $f$ towards $M_{\infty}$, since it can be proved that for any $\mu\in (0,\lambda_L)$
\[
\Vert f-M_{\infty}\Vert =\mathcal{O}\big(e^{-\mu t}\big),
\]
where the exponential convergence is obtained in some particular functional spaces that we do not want to specify here (the interested reader can refer to \cite{Vil, UkaYan} for a detailed description on the subject).
\fi  %%%%%%%%%%%%%%%%%%%%%%%%%%%%%%%%%%%%%%%%%%%%%%%%%%%%

In the mono-species setting, the first outcome about explicit estimates for the spectral gap $\lambda_L$ has been obtained by Bobylev in \cite{Bob1, Bob2} for the case of Maxwell molecules, using Fourier methods to obtain a complete and explicit diagonalization of $L$. More recently, a systematic derivation of explicit coercivity estimates for the linearized collision operator has been established \cite{BarMou, Mou1, MouStr} for a wide class of collision kernels (including cutoff hard potentials and hard spheres).

Starting from these results, in the past few years, the question of convergence towards the equilibrium has also been addressed in the case of multi-species kinetic models. Many Boltzmann-like equations describing the time evolution of a mixture of different gases can be found in the literature. To mention a few of them, we can refer to \cite{Sir,Gre,Mor} for the first kinetic models for mixtures, to \cite{GarSanBre,AndAokPer,BruPavSch} for some BGK-type models, and to \cite{RosMaz,Sir,BouDesLetPer,RosSpi,DesMonSal} for multi-species kinetic models where also chemical reactions are taken into account. Moreover, we recall the recent works \cite{BouGrePavSal,DauJunMouZam,BriDau,Bri}, where some of these models are mathematically investigated.

\vspace{1mm}
In this work, we focus on the multi-species setting. The equation we consider \cite{DesMonSal,BouGreSal,BouGrePav} can be seen as the counterpart of (\ref{MonospeciesBoltzmann}) in the case of gaseous mixtures, and shall be presented in detail in the next section. We only recall here that it models a mixture of $N$ different ideal and monatomic gases, prescribing the time evolution of the multi-component gas distribution function $\boldF=(F_1,\ldots, F_N)$. It reads, on $\R_+\times\T^3\times\R^3$,
\begin{equation}\label{MultispeciesBoltzmann}
\partial_t \boldF +v\cdot\nabla_x\boldF=\mathbf{Q}(\boldF,\boldF).
\end{equation}
As in the mono-species framework, it is possible to prove \cite{DesMonSal} a multi-species version of the $H$-theorem for the Boltzmann operator $\mathbf{Q}$, which suggests in this case that the global equilibrium of the mixture can be taken, after rescaling of the coordinate system, to be the global Maxwellian distribution $\pmb{\mu}=(\mu_1,\ldots,\mu_N)$, with
\[
\mu_i(v)= c_{i,\infty}\left(\frac{m_i}{2\pi}\right)^{3/2}e^{-m_i\frac{|v|^2}{2}},\hspace{0.5cm} v\in\R^3,
\] 
for any $1\leq i\leq N$. Here, we have denoted by $(m_i)_{1\leq i\leq N}$ and $(c_{i,\infty})_{1\leq i\leq N}\in\R^N$ the atomic masses and numbers of particles of each species in the gaseous mixture. 

In order to investigate the convergence of solutions of (\ref{MultispeciesBoltzmann}) towards the equilibrium $\pmb{\mu}$, we can study the behaviour of a small perturbation $\boldf=(f_1,\ldots,f_N)$ around $\pmb{\mu}$. This approach raises the same kind of questions explained in our earlier discussion on the mono-species setting. In particular, the analysis of the large time asymptotic relaxation for solutions of the Boltzmann multi-species equation shall eventually lead to deal again with the problem of explicitly quantifying the rate of convergence to the equilibrium. 

The study of explicit coercivity estimates for the linearized Boltzmann multi-species operator $\boldL(\boldf)=\mathbf{Q}(\pmb{\mu},\boldf)+\mathbf{Q}(\boldf,\pmb{\mu})$ has been very recently tackled, and the existence of a spectral gap for this operator has been shown in \cite{DauJunMouZam} for mixtures where the atomic masses of the different species are the same ($m_i =m_j$ for any $1\leq i,j\leq N$), and in \cite{BriDau}, in the general case of species with different atomic masses. Both results are valid for Maxwellian, hard potentials and hard spheres collision kernels with cutoff.

In the multi-species framework, the complexity of the cross interactions between particles of the different gaseous components leads to new interesting physical phenomena \cite{BouGreSal,BouGrePav}. These can be described considering perturbative regimes which are alternative to the common one discussed above, around the global equilibrium $\pmb{\mu}$. Inspired by \cite{BouGreSal}, we here investigate a new perturbative setting around local Maxwellian distributions $\mathbf{\epsM}=(\epsM_1,\ldots,\epsM_N)$ with small macroscopic velocities of order $\varepsilon$, different for each species of the mixture. More precisely, for any $1\leq i\leq N$, we can write $\epsM_i$ as
\begin{equation}\label{LocMaxDis}
\epsM_i(t,x,v)=c_{i,\infty}\left(\frac{m_i}{2\pi}\right)^{3/2}\exp\left\{-m_i\frac{|v-\varepsilon u_i(t,x)|^2}{2}\right\},\hspace{0.5cm} (t,x,v)\in\R_+\times\T^3\times\R^3.
\end{equation}
We note in particular that the parameter $\varepsilon$ only translates the smallness of the macroscopic velocities of the different species, and not the order of the perturbation around $\mathbf{\epsM}$.

We shall see that these Maxwellians are not a local equilibrium for the mixture, whereas they were in the mono-species setting. This feature induces additional difficulties if one wants to analyze the spectrum of the linearized operator $\boldL^{\varepsilon}(\boldf)=\mathbf{Q}(\mathbf{\epsM},\boldf)+\mathbf{Q}(\boldf,\mathbf{\epsM})$, because $\boldL^{\varepsilon}$ does not satisfy any self-adjointness property and it appears problematic to determine the space of equilibrium states $\Ker \mathbf{\epsL}$. Consequently, we did not manage to give a detailed description of the spectrum of $\mathbf{\epsL}$ and show explicit spectral gap estimates, using the same methods developed in \cite{DauJunMouZam,BriDau}. 

Instead of trying to recover a full spectral gap property, our idea is to establish an explicit estimate for an upper bound of the entropy production functional associated to $\boldL^{\varepsilon}$. More precisely, we prove that the estimate known to exist for the Dirichlet form of $\boldL$ \cite{DauJunMouZam,BriDau} remains valid for the Dirichlet form of $\mathbf{\epsL}$, up to a correction term of order $\varepsilon$. The strategy exploits the fact that the non-equilibrium state $\mathbf{\epsM}$ is close to the global equilibrium $\pmb{\mu}$ up to a factor of order $\varepsilon$, and the same property holds for $\mathbf{\epsL}$ and $\boldL$. Specifically, we analyze the structure of the operator $\mathbf{\epsL}-\boldL$, showing that it can be seen as the sum of a multiplicative operator and an integral one. To deal with the latter, we exhibit its kernel structure by writing down a Carleman representation and deduce useful pointwise estimates. Due to the mixing effect in the exponential decay rates among the cross-interactions between species (each species has different atomic mass $m_i$ and different macroscopic velocity $\varepsilon u_i$), we need to carefully treat the Maxwellian weights by deriving accurate $L^1$ controls in the velocity variable. In the end, the operator $\mathbf{\epsL}-\boldL$ is bounded from above by a positive quantity of order $\varepsilon$, producing the correction term in the estimate of the Dirichlet form of $\mathbf{\epsL}$.

\iffalse
In this spirit, a future goal of our research is to apply the estimates derived here to the analysis of an $\varepsilon$-asymptotic stability for solutions to the multi-species Boltzmann equation in this new perturbative setting. We only mention that this study would be of particular interest in the investigation of the convergence of solutions of the Boltzmann equation for mixtures towards solutions of the Maxwell-Stefan equations, in a diffusive regime \cite{BouGreSal}.
\fi

\vspace{2mm}
 The work is organized as follows. We first recall the multi-species model and give a detailed description of its main features, introducing the linearizations of the Boltzmann multi-species operator which we intend to study. Next, we recall the fundamental properties of the linearized Boltzmann multi-species operator and state our main result. The last section is dedicated to the proof of this result.

%$$$$$$$$$$$$$$$$$$$$$$$$$$$$$$$$$$$$$$$$$$  SECTION 2  $$$$$$$$$$$$$$$$$$$$$$$$$$$$$$$$$$$$$$$$$$$$%

\section{A kinetic model for mixtures}

The evolution of a dilute gas composed of $N$ different species of chemically non-reacting mono-atomic particles having atomic masses $(m_i)_{1\leq i\leq N}$ can be modelled by the following equations of Boltzmann type, for any $1\leq i\leq N$,

\iffalse
\begin{align}
\partial_t F_i(t,x,v)+v\cdot\nabla_x F_i(t,x,v)=Q_i(\boldF,\boldF)(t,x,v),\label{BoltzmannEquation}\\
F_i(0,x,v)=F_i^{\init}(x,v).\label{InitialCondition}\tag{\theequation'}
\end{align}
\fi

\begin{equation}\label{BoltzmannEquation}
\partial_t F_i(t,x,v)+v\cdot\nabla_x F_i(t,x,v)=Q_i(\boldF,\boldF)(t,x,v),\quad t\geq 0,\ x\in\T^3,\ v\in\R^3,
\end{equation}
%with the initial conditions $F_i(0,x,v)=F_i^{\init}(x,v)$. 
The vector $\boldF=(F_1,\ldots,F_N)$ denotes the unknown vector-valued distribution function describing the mixture and each $F_i$ is the distribution function corresponding to the $i$-th species, which depends on time $t$, position $x$ and velocity $v$. Henceforth, vectors and vector-valued functions will be denoted by bold letters, while the corresponding indexed letters will indicate their components. For example, $\mathbf{W}$ represents the vector or vector-valued function $(W_1,\ldots ,W_N)$.

\vspace{1mm}
 The Boltzmann multi-species operator $\mathbf{Q}(\boldF,\boldF)=\big(Q_1(\boldF,\boldF),\ldots,Q_N(\boldF,\boldF)\big)$ describes the way the mixture molecules interact. The operator only acts on the velocity variable and it is defined component-wise, for any $i$, by
\[
Q_i(\boldF,\boldF)(v)=\sum_{j=1}^N Q_{ij}(F_i,F_j)(v)=\sum_{j=1}^N\int_{\R^3\times\Sf}B_{ij}(|v-v_*|,\theta)(F_i^{\prime}F_j^{\prime *}-F_i F_j^*)\dd\sigma\dd v_*,\hspace{0.5cm} v\in\R^3.
\]
In the previous equality, we used the standard shorthand notations $F_i^{\prime}=F_i(v^{\prime})$, $F_i=F_i(v)$, $F_j^{\prime *}=F_j(v_*^{\prime})$, $F_j^*=F_j(v_*)$. The pre-collisional velocities $v^{\prime}$ and $v_*^{\prime}$ are given in terms of the post-collisional velocities $v$ and $v_*$ by the elastic collision rules

\[
\left\{\begin{array}{l}
v^{\prime}=\frac{1}{m_i+m_j}(m_i v+m_j v_*+m_j|v-v_*|\sigma),\\
\\
v_*^{\prime}=\frac{1}{m_i+m_j}(m_i v+m_j v_*-m_i|v-v_*|\sigma),
\end{array}\right.
\]
where $\sigma\in\Sf$ is a parameter whose existence is ensured by the conservation of microscopic momentum and kinetic energy
\begin{equation}\label{ConservationProperties}
m_i v^{\prime}+m_j v_*^{\prime}=m_i v+m_j v_*,\qquad
\frac{1}{2}m_i|v^{\prime}|^2+\frac{1}{2}m_j|v_*^{\prime}|^2=\frac{1}{2}m_i|v|^2+\frac{1}{2}m_j|v_*|^2.
\end{equation}

\vspace{1mm}
 The collisional cross-sections $B_{ij}$ are nonnegative functions of $|v-v_*|$ and the cosine of the deviation angle $\theta\in[0,\pi]$ between $v-v_*$ and $\sigma\in\Sf$. They encode all the information about the microscopic interactions between molecules in the mixture, and their choice is essential when studying the properties of the Boltzmann operator. In this work, we focus on cutoff Maxwellian, hard potential and hard spheres collision kernels. Namely, let us make the following assumptions on each $B_{ij}$, $i$ and $j$ being fixed.

\begin{enumerate}
\item[(H1)] It satisfies a symmetry property with respect to the interchange of both species $i$ and $j$
\[
B_{ij}(|v-v_*|,\cos\theta)=B_{ji}(|v-v_*|,\cos\theta),\hspace{0.5cm} \forall\ v,v_*\in\R^3, \forall\ \theta\in\R.
\]
\item[(H2)] It decomposes into the product of a kinetic part $\Phi_{ij}\geq 0$ and an angular part $b_{ij}\geq 0$, namely
\[
B_{ij}(|v-v_*|,\cos\theta)=\Phi_{ij}(|v-v_*|)b_{ij}(\cos\theta),\hspace{0.5cm} \forall\ v,v_*\in\R^3, \forall\ \theta\in\R.
\]
\item[(H3)] The kinetic part has the form of hard or Maxwellian ($\gamma = 0$) potential, i.e.
\[
\Phi_{ij}(|v-v_*|)=C_{ij}^{\Phi}|v-v_*|^{\gamma},\quad C_{ij}^{\Phi}>0,\quad \gamma\in [0,1],\hspace{0.5cm} \forall\ v,v_*\in\R^3.
\]
\item[(H4)] For the angular part, we consider a strong form of Grad's angular cutoff \cite{Gra1}. We assume that there exists a constant $C >0$ such that
\[
0<b_{ij}(\cos\theta)\leq C|\sin\theta||\cos\theta|,\qquad b_{ij}^{\prime}(\cos\theta)\leq C,\qquad \theta\in [0,\pi].
\]
Furthermore, we assume that
\[
\inf_{\sigma_1,\sigma_2\in\Sf}\int_{\Sf}\min\{b_{ii}(\sigma_1\cdot\sigma_3),b_{ii}(\sigma_2\cdot\sigma_3)\}\dd \sigma_3 >0.
\]
\end{enumerate}

\vspace{1mm}
 From now on, let us use the following notations. Consider a positive measurable vector-valued function $\mathbf{W}=(W_1,\ldots,W_N):\R^3\to(\R_+^*)^N$ in the variable $v$. For any $1\leq i\leq N$, we define the weighted Hilbert space $L^2(\R^3, W_i)$ by introducing the scalar product and norm
\[
\langle f_i,g_i\rangle_{L_v^2(W_i)}=\int_{\R^3}f_i g_i W_i^2\dd v,\quad \| f_i\|^2_{L_v^2(W_i)}=\langle f_i,f_i\rangle_{L_v^2(W_i)},\quad \forall\ f_i,g_i\in L^2(\R^3,W_i).
\]
With this definition, we say that $\boldf=(f_1,\ldots, f_N):\R^3\to\R^N\in L^2(\R^3, \mathbf{W})$ if $f_i :\R^3\to\R\in L^2(\R^3,W_i)$, and we associate to $L^2(\R^3,\mathbf{W})$ the natural scalar product and norm
\[
\langle\boldf,\mathbf{g}\rangle_{L_v^2(\mathbf{W})}=\sum_{i=1}^N\langle f_i,g_i\rangle_{L_v^2(W_i)},\quad \|\boldf\|_{L_v^2(\mathbf{W})}=\left(\sum_{i=1}^N\| f_i\|^2_{L_v^2(W_i)}\right)^{1/2}.
\]
\iffalse  %%%%%%%%%%%%%%%%%%%%%%%%%%%%%%%%%%%%%%%%%%
To avoid confusion, we also mention that the standard euclidean scalar product between two vectors $\mathbf{a}, \mathbf{b}\in\R^N$ will be denoted as
\[
(\mathbf{a},\mathbf{b})_N=\sum_{j=1}^N a_j b_j .
\]
\fi     %%%%%%%%%%%%%%%%%%%%%%%%%%%%%%%%%%%%%%%%%%%
Moreover, for the sake of simplicity, we define the shorthand notation
\[
\langle v\rangle =\Big(1+|v|^2\Big)^{1/2},
\]
not to be confused with the usual Euclidean scalar product $\langle \cdot ,\cdot \rangle$ in $\R^3$, also used in the upcoming computations. 

\vspace{2mm}
Let us now recall the main properties of $\mathbf{Q}$, which can be found in \cite{CerIllPul, DesMonSal, BouGreSal, DauJunMouZam, BriDau}.\hspace{1mm}Using the symmetries of the collision operator combined with classical changes of variables $(v,v_*)\mapsto(v^{\prime},v_*^{\prime})$ and $(v,v_*)\mapsto(v_*,v)$, we can obtain the weak formulations of $Q_{ij}(F_i,F_j)$

\[
\hspace{-2.5cm}\int_{\R^3}Q_{ij}(F_i,F_j)(v)\psi_i(v)\dd v = \int_{\R^6\times\Sf}B_{ij}(|v-v_*|,\cos\theta)F_i F_j^*\big(\psi_i^{\prime}-\psi_i\big)\dd\sigma\dd v_*\dd v,
\]
\vspace{1mm}

\[
\hspace{-6cm}\int_{\R^3}Q_{ij}(F_i,F_j)(v)\psi_i(v)\dd v + \int_{\R^3}Q_{ji}(F_j,F_i)(v)\psi_j(v)\dd v =
\]
\[
\hspace{3cm}-\frac{1}{2}\int_{\R^6\times\Sf}B_{ij}(|v-v_*|,\cos\theta)\big(F_i^{\prime}F_j^{\prime *}-F_i F_j^*\big)\big(\psi_i^{\prime}+\psi_j^{\prime *}-\psi_i-\psi_j^*\big)\dd\sigma\dd v_*\dd v,
\]
for any vector-valued function $\pmb{\psi}=(\psi_1,\ldots,\psi_N):\R^3\to \R^N$ such that the integrals on the left-hand side of the above equalities are well-defined. These relations allow to deduce the conservation properties of the Boltzmann multi-species operator. More precisely, the equality

\[
\sum_{i=1}^N\int_{\R^3} Q_i(\boldF,\boldF)(v) \psi_i(v) \dd v=0
\]
holds if and only if $\pmb{\psi}$ is a collision invariant of the mixture, namely 
\[
\pmb{\psi}\in\textrm{Span}\big\{\mathbf{e}_1,\ldots,\mathbf{e}_N,v_1\mathbf{m},v_2\mathbf{m},v_3\mathbf{m},|v|^2\mathbf{m}\big\},
\] 
where $\mathbf{e}_k=(\delta_{ik})_{1\leq i\leq N}$ for $1\leq k\leq N$, and $\mathbf{m}=(m_1,\ldots,m_N)$. In particular, from (\ref{BoltzmannEquation}) and the conservation properties of $\mathbf{Q}$, we immediately get that the quantities

\begin{equation}\label{ConservationQuantities}
\begin{aligned}
c_{i,\infty}=\int_{\R^3\times\T^3}F_i(t,x,v)\dd x\dd v,\hspace{0.5cm} 1\leq i\leq N,\hspace{3cm}\\
u_{\infty}=\frac{1}{\rho_{\infty}}\sum_{i=1}^N\int_{\R^3\times\T^3}m_i v F_i\dd x\dd v,\hspace{0.5cm} \theta_{\infty}=\frac{1}{3\rho_{\infty}}\sum_{i=1}^N\int_{\R^3\times\T^3}m_i|v-u_{\infty}|^2F_i\dd x\dd v,
\end{aligned}
\end{equation}
are preserved for any time $t\geq 0$. Here, $c_{i,\infty}$ stands for the total number of particles of species $i$ and we have also defined the total mass of the mixture $\rho_{\infty}=\sum_{i=1}^N m_i c_{i,\infty}$, its total momentum $\rho_{\infty}u_{\infty}$ and its total energy $\frac{3}{2}\rho_{\infty}\theta_{\infty}$.

\vspace{1mm}
 Moreover, the operator $\mathbf{Q}$ satisfies a multi-species version of the $H$-theorem, see \cite[Proposition 1]{DesMonSal}. It states, in particular, that the only solutions of the equation $\mathbf{Q}(\boldF,\boldF)=0$ are distribution functions $\boldF=(F_1,\ldots,F_N)$ such that their components take the form of local Maxwellians with a common macroscopic velocity, i.e. for any $1\leq i\leq N$ and $(t,x,v)\in\R_+\times\T^3\times\R^3$,

\begin{equation}\label{LocalEquilibrium}
F_i(t,x,v) = c_{i,\loc}(t,x)\left(\frac{m_i}{2\pi K_B \theta_{\loc}(t,x)}\right)^{3/2}\exp\left\{-m_i\frac{|v-u_{\loc}(t,x)|^2}{2 K_B \theta_{\loc}(t,x)}\right\}.
\end{equation}
Here, we have defined the following local quantities:

\[
\begin{aligned}
c_{i,\loc}(t,x)=\int_{\R^3}F_i(t,x,v)\dd v,\hspace{0.3cm}1\leq i\leq N,\hspace{0.8cm}\rho_{\loc}(t,x)=\sum_{i=1}^N m_i c_{i,\loc}(t,x)\hspace{2cm}\\
u_{\loc}(t,x)=\frac{1}{\rho_{\loc(t,x)}}\sum_{i=1}^N\int_{\R^3}m_i v F_i\dd v,\hspace{0.8cm} \theta_{\loc}(t,x)=\frac{1}{3\rho_{\loc}(t,x)}\sum_{i=1}^N\int_{R^3}m_i|v-u_{\loc}|^2F_i\dd v.
\end{aligned}
\]

It is worth noting that the local Maxwellian distributions (\ref{LocalEquilibrium}) constitute a local equilibrium for the mixture, since $\mathbf{Q}(\boldF,\boldF)=0$. This is not the case for the local Maxwellians $\epsM_i$ introduced in (\ref{LocMaxDis}), which in contrary form a non-equilibrium state for the mixture because of the different macroscopic velocities $\varepsilon u_i$.

\vspace{1mm}
 In particular, since we work in $\T^3$, we can use the fact that the quantities (\ref{ConservationQuantities}) are conserved with respect to time, to deduce that the only global equilibrium of the mixture, i.e. the unique stationary solution $\boldF=(F_1,\ldots,F_N)$ to (\ref{BoltzmannEquation}), is given for any $1\leq i\leq N$ by the global Maxwellians

\[
F_i(v) = c_{i,\infty}\left(\frac{m_i}{2\pi K_B \theta_{\infty}}\right)^{3/2}\exp\left\{-m_i\frac{|v-u_{\infty}|^2}{2 K_B \theta_{\infty}}\right\},\hspace{0.5cm} v\in\R^3.
\]
Up to a translation and a dilation of the coordinate system, we can assume that $u_{\infty}=0$ and $K_B\theta_{\infty}=1$, so that the only global equilibrium $\boldM=(\mu_1,\ldots,\mu_N)$ becomes

\begin{equation}\label{GlobalEquilibrium}
\mu_i(v)= c_{i,\infty}\left(\frac{m_i}{2\pi}\right)^{3/2}e^{-m_i\frac{|v|^2}{2}},\hspace{0.5cm} v\in\R^3,
\end{equation}
for any $i$. In what follows, we shall always work in the weighted Hilbert space $\spaceLMR$, where we use the notation $\pmb{\mu}^{-1/2}=\big(\mu_1^{-1/2},\ldots,\mu_N^{-1/2}\big)$.

\vspace{1mm}
 As already mentioned in the introduction, in order to study the Cauchy problem (\ref{BoltzmannEquation}) and the convergence of the solution to equilibrium, it is common \cite{DesMonSal, BouGrePavSal,DauJunMouZam, BriDau} to consider a perturbative regime where each distribution function $F_i$ is close to the global equilibrium given by (\ref{GlobalEquilibrium}). More precisely, when writing $F_i(t,x,v)=\mu_i(v)+f_i(t,x,v)$ for any $1\leq i\leq N$, where $f_i$ is a small perturbation of the global equilibrium $\mu_i$, equations (\ref{BoltzmannEquation}) can be rewritten in the form of the following perturbed multi-species Boltzmann system, set in $\R_+\times\T^3\times\R^3$,
\[
\partial_t \boldf+v\cdot\nabla_x \boldf=\boldL(\boldf)+\mathbf{Q}(\boldf,\boldf).
\]
%supplemented with the initial condition $f_i(0,x,v)=f_i^{\init}(x,v)$, for all $1\leq i\leq N$. 
The linearized Boltzmann multi-species operator is given by $\boldL=(L_1,\ldots,L_N)$, with
\begin{equation}\label{GlobalLinearized1}
L_i(\boldf)=\sum_{j=1}^N L_{ij}(f_i,f_j)=\sum_{j=1}^N \Big(Q_{ij}(\mu_i,f_j)+Q_{ij}(f_i,\mu_j)\Big)
\end{equation} 
for any $i$, and its main properties will be recalled in the next section.\hspace{1mm}This operator has been intensively studied in the past few years, and the main results about his kernel and his spectral properties have been obtained in \cite{DauJunMouZam} in the case of species having the same masses $m_i=m_j$ and in \cite{BriDau} in the general case of species with different masses.

\iffalse  %%%%%%%%%%%%%%%%%%%%%%%%%%%%%%%%%%%%%%%%%%%%%
Often, it is convenient to introduce another form of the linearized operator by setting $h_i=f_i\mu_i^{-1}$ for any $1\leq i\leq N$ and defining $\bm{\mathcal{L}}=(\mathcal{L}_1,\ldots ,\mathcal{L}_N)$ as

\begin{equation}\label{GlobalLinearized2}
\mathcal{L}_i(\mathbf{h})=\sum_{j=1}^N \mathcal{L}_{ij}(h_i,h_j)=\sum_{j=1}^N\int_{\R^3\times\Sf}B_{ij}(|v-v_*|,\cos\theta)\mu_i \mu_j^*\big(h_i^{\prime}+h_j^{\prime *}-h_i- h_j^*\big)\dd\sigma\dd v_*,
\end{equation}
for any $i$. In particular, we note that the operators $\boldL$ and $\bm{\mathcal{L}}$ are naturally linked by the relation

\[
\mathcal{L}_{ij}(h_i,h_j)=L_{ij}(f_i \mu_i^{-1}, f_j \mu_j^{-1})
\]
which follows from the identity $\mu_i^{\prime}\mu_j^{\prime *}=\mu_i \mu_j^*$, coming from the microscopic conservation properties (\ref{ConservationProperties}).

\begin{oss}
For the sake of simplicity, in the following, any time we use the function $\mathbf{h}=(h_1,\ldots ,h_N)$, it will be linked to $\boldf$ by the relations $h_i = f_i\mu_i^{-1}$, for all $i$.
\end{oss}
\fi     %%%%%%%%%%%%%%%%%%%%%%%%%%%%%%%%%%%%%%%%%%%%%%%

\vspace{2mm}
In this work, we consider a new perturbative regime around the local Maxwellian distributions $\mathbf{\epsM}=(\epsM_1,\ldots,\epsM_N)$, given for any $1\leq i\leq N$ by
\begin{equation}\label{EpsilonMaxwellian}
\epsM_i(t,x,v)= c_{i,\infty}\left(\frac{m_i}{2\pi}\right)^{3/2}\exp\left\{-m_i\frac{|v-\varepsilon u_i(t,x)|^2}{2}\right\},\hspace{0.5cm} (t,x,v)\in\R_+\times\T^3\times\R^3,
\end{equation}
with small macroscopic velocities depending on the parameter $\varepsilon\ll 1$.\hspace{1mm}We again stress the fact that the distribution functions $\epsM_i$ are local in time and space like the Maxwellians given in (\ref{LocalEquilibrium}), but they are not a local equilibrium for the Boltzmann operator $\mathbf{Q}=(Q_1,\ldots,Q_N)$. In fact, the macroscopic velocities are different for each species of the mixture (since $u_i\neq u_j$ for $i\neq j$), implying that $\mathbf{Q}(\mathbf{\epsM},\mathbf{\epsM})$ is not zero anymore. Nevertheless, it is also important to note that the macroscopic velocities we consider are small (of order $\varepsilon$), and this requirement is crucial to recover our result, since it shall be used to show that $\mathbf{\epsM}$ is close to $\pmb{\mu}$ up to an order $\varepsilon$. Thus, even if $\mathbf{\epsM}$ is a non-equilibrium state, it is still not far away from the global equilbrium of the mixture.

\vspace{1mm}
 If we now assume that each distribution function $F_i$ is close to a local Maxwellian of type (\ref{EpsilonMaxwellian}), namely $F_i(t,x,v)=\epsM_i(t,x,v)+f_i(t,x,v)$ for all $1\leq i\leq N$, with $\boldf=(f_1,\ldots,f_N)$ being again a small perturbation, equations (\ref{BoltzmannEquation}) posed on $\R_+\times\mathbb{T}^3\times\R^3$ can be rewritten as
\begin{equation}\label{EpsilonSystem}
\partial_t\mathbf{\epsM}+\partial_t \boldf+v\cdot\nabla_x \mathbf{\epsM}+v\cdot\nabla_x \boldf=\mathbf{Q}(\mathbf{\epsM},\mathbf{\epsM})+\mathbf{\epsL}(\boldf)+\mathbf{Q}(\boldf,\boldf).
\end{equation}
The operator $\mathbf{\epsL}=(\epsL_1,\ldots,\epsL_N)$ is the Boltzmann multi-species operator linearized around the local Maxwellian distributions (\ref{EpsilonMaxwellian}), and it is given by

\begin{equation}\label{LocalLinearized1}
\epsL_i(\boldf)=\sum_{j=1}^N \epsL_{ij}(f_i,f_j)=\sum_{j=1}^N \Big(Q_{ij}(\epsM_i,f_j)+Q_{ij}(f_i,\epsM_j)\Big),
\end{equation} 
for any $i$.

Before going into details of our main result, we here note that, for velocities $v-\varepsilon u_i$ and $v_* -\varepsilon u_j$, the microscopic conservation properties (\ref{ConservationProperties}) do not hold anymore and consequently ${\epsM_i}^{\prime}{\epsM_j}^{\prime *}\neq\epsM_i {\epsM_j}^*$. Therefore, the operator $\mathbf{\epsL}$ exhibits no good self-adjointness properties and this peculiarity will force us to develop an alternative strategy to the one presented in \cite{DauJunMouZam} and \cite{BriDau} in order to investigate its spectral properties. 

\iffalse  %%%%%%%%%%%%%%%%%%%%%%%%%%%%%%%%%%%%%%%%%%%%%%%%
We shall evade the obstacle by proving an upper bound control for the entropy production functional associated to $\mathbf{\epsL}$, without trying to recover a full spectral gap property.
\fi       %%%%%%%%%%%%%%%%%%%%%%%%%%%%%%%%%%%%%%%%%%%%%%%%%

We also mention that looking at a perturbative regime around non-equilibrium Maxwellians $\epsM_i$ to recover system (\ref{EpsilonSystem}) can be of particular interest in the investigation of the asymptotic approximation of solutions to the Maxwell-Stefan diffusion equations, in the spirit of \cite{BouGreSal, BouGrePav}.

%$$$$$$$$$$$$$$$$$$$$$$$$$$$$$$$$$$$$$$$$$  SECTION 3  $$$$$$$$$$$$$$$$$$$$$$$$$$$$$$$$$$$$$$$$$$$$%

\section{Main result} 

This section is dedicated to the presentation of our main theorem. Due to the choice of the linearization of $\mathbf{Q}$ around $\mathbf{\epsM}$, the standard tools developed to perform the spectral analysis of $\boldL$ are no more valid for the operator $\mathbf{\epsL}$. This issue will be overcome by connecting $\mathbf{\epsL}$ to $\boldL$, in order to take advantage of the results known about the spectrum of $\boldL$. We are able to show that the Dirichlet form of $\mathbf{\epsL}$ in the space $\spaceLMR$ is upper-bounded by a negative term coming from the spectral gap of $\boldL$ in $\spaceLMR$, plus a positive contribution of order $\varepsilon$, which takes into account the fact that we are working in a space where $\mathbf{\epsL}$ is not self-adjoint.

 We begin by recalling the fundamental result derived in \cite{BriDau} about the existence of a spectral gap for $\boldL$ in the space $\spaceLMR$.
\iffalse %%%%%%%%%%%%%%%%%%%%%%%%%%%%%%%%%%%%%%%%%%%%
the notation for the weight vector-valued function $\pmb{\mu}^{-1/2}$ means that we take the exponentiation component-wise as  We then state our theorem and the motivations behind this study. We conclude with an idea of the proof, before going into details in the next section.
\fi     %%%%%%%%%%%%%%%%%%%%%%%%%%%%%%%%%%%%%%%%%%%%%
To do this, we first present the structure of $\Ker\boldL$. 

\vspace{1mm}
The linearized Boltzmann multi-species operator $\boldL$ defined by (\ref{GlobalLinearized1}) is a closed self-adjoint operator in $\spaceLMR$. Its kernel is described by an orthonormal basis $\big(\pmb{\phi}^{(i)}\big)_{1\leq i\leq N+4}$ in $\spaceLMR$, that is
\[
\Ker\boldL=\Span\big\{\pmb{\phi}^{(1)},\ldots,\pmb{\phi}^{(N+4)}\big\},
\]
where
\[
\left\{\begin{array}{ll}
\displaystyle \pmb{\phi}^{(k)}=\mu_k\mathbf{e}_k, \quad 1\leq k\leq N,\\
\\
\displaystyle \pmb{\phi}^{(N+l)}=\frac{v_l}{\left(\sum_{j=1}^N m_j c_{j,\infty}\right)^{1/2}}\big(m_i \mu_i\big)_{1\leq i\leq N}, \quad 1\leq l\leq 3,\\
\displaystyle \pmb{\phi}^{(N+4)}=\frac{1}{\left(\sum_{j=1}^N c_{j,\infty}\right)^{1/2}}\left(\frac{m_i|v|^2-3}{\sqrt{6}} \mu_i\right)_{1\leq i\leq N}.
\end{array}\right.
\]
Using this basis, for any $\boldf\in\spaceLMR$ we can define its orthogonal projection onto $\Ker\boldL$ in $\spaceLMR$ as
\[
\pi_{\boldL}(\boldf)(v)=\sum_{k=1}^{N+4}\big\langle\boldf ,\pmb{\phi}^{(k)}\big\rangle_{\spaceLM}\pmb{\phi}^{(k)}(v).
\]

\iffalse  %%%%%%%%%%%%%%%%%%%%%%%%%%%%%%%%%%%%%%%%%%%%
 In particular, for any $\boldf\in\spaceLM$ there exist functions $\mathbf{a}^{\mathbf{f}}:\R_+\times\T^3\to\R^N$, $b^{\mathbf{f}}:\R_+\times\T^3\to\R^3$ and $e^{\mathbf{f}}:\R_+\times\T^3\to\R$ such that we can write

\begin{equation}\label{ProjectionL}
\pi_{\boldL}(\boldf)_i=m_i \mu_i\Big(a^{\mathbf{f}}_i(t,x)+b^{\mathbf{f}}(t,x)\cdot v+e^{\mathbf{f}}(t,x)|v|^2\Big),\quad 1\leq i\leq N,
\end{equation}
and $\mathbf{a}^{\mathbf{f}}, b^{\mathbf{f}}, e^{\mathbf{f}}$ are the coordinates of $\pi_{\boldL}(\boldf)$ with respect to the basis $\big(\pmb{\phi}_i\big)_{1\leq i\leq N+4}$.

 Finally, the operator $\boldL$ has an associated Dirichlet form that can be written  

\[
\big\langle\boldL(\boldf),\boldf\big\rangle_{\spaceLM}=-\frac{1}{4}\sum_{i,j} \int_{\R^6\times\Sf}B_{ij}\big(|v-v_*|,\cos\theta\big)\mu_i \mu_j^*\big(h_i^{\prime}+h_j^{\prime *}-h_i-h_j^*\big)^2\dd \sigma\dd v_*\dd v,
\]
and it satisfies the following fundamental theorem, proved in \cite{BriDau}.
\fi     %%%%%%%%%%%%%%%%%%%%%%%%%%%%%%%%%%%%%%%%%%

Then, the linearized Boltzmann multi-species operator satisfies the following fundamental theorem, proved in \cite{BriDau}.

\begin{teor}\label{SpectralGapL}
Let the collision kernels $B_{ij}$ satisfy assumptions $\mathrm{(H1)-(H4)}$. Then there exists an explicit constant $\lambda_{\boldL}>0$ such that, for all $\boldf\in\spaceLMR$,

\[
\big\langle\boldL(\boldf),\boldf\big\rangle_{\spaceLM}\leq -\lambda_{\boldL}\|\boldf-\pi_{\boldL}(\boldf)\|_{\spaceLvM}^2.
\]
The constant $\lambda_{\boldL}$ only depends on $N$, the different masses $(m_i)_{1\leq i\leq N}$ and the collision kernels $(B_{ij})_{1\leq i,j\leq N}$.
\end{teor}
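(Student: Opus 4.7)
The plan is to reduce to $\boldf \in (\Ker \boldL)^\perp$ and then split $\langle \boldL(\boldf),\boldf\rangle_{\spaceLM}$ into a diagonal part, handled by the mono-species spectral gap theorem, and a cross-species part, used to control the extra directions in $\bigoplus_{i} \Ker L_{ii}$ that do not lie in $\Ker \boldL$. Since $\boldL$ is self-adjoint in $\spaceLMR$, we have $\boldL(\pi_{\boldL}(\boldf)) = 0$ and $\langle \boldL(\boldf),\boldf\rangle = \langle \boldL(\boldf^\perp),\boldf^\perp\rangle$ with $\boldf^\perp = \boldf - \pi_{\boldL}(\boldf)$, so it suffices to prove the inequality for $\boldf \in (\Ker \boldL)^\perp$.

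First I would exploit the symmetries (H1)--(H2) and the pre/post-collisional change of variables to derive the weak form
\[
\langle \boldL(\boldf),\boldf\rangle_{\spaceLM} = -\frac{1}{4}\sum_{i,j=1}^N \int_{\R^6\times\Sf} B_{ij}\, \mu_i \mu_j^* \bigl(h_i' + h_j'^* - h_i - h_j^*\bigr)^2 \,\dd\sigma\,\dd v_* \,\dd v,
\]
with $h_i = f_i \mu_i^{-1}$. The diagonal terms $i=j$ are mono-species Dirichlet forms for the operators $L_{ii}$, to which the constructive mono-species spectral gap theorem of Baranger--Mouhot and Mouhot--Strain applies, yielding
\[
-\frac{1}{4}\int B_{ii}\,\mu_i\mu_i^* (h_i' + h_i'^* - h_i - h_i^*)^2 \,\dd\sigma\,\dd v_*\,\dd v \leq -\lambda_i \,\|f_i - \pi_i f_i\|^2_{L^2(\langle v\rangle^{\gamma/2}\mu_i^{-1/2})},
\]
where $\pi_i$ is the orthogonal projection onto the 5-dimensional mono-species kernel $\Span\{\mu_i, v_1\mu_i, v_2\mu_i, v_3\mu_i, |v|^2\mu_i\}$. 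This controls $(I-\pi_i)f_i$ for each $i$, but $\bigoplus_i \mathrm{Im}(\pi_i)$ has dimension $5N$ while $\Ker \boldL$ has dimension only $N+4$, leaving a defect subspace of dimension $4(N-1)$ still to be handled.

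The main obstacle is controlling the defect component via the cross-terms $i\neq j$. I would write $\pi_i f_i^\perp = (a_i + b_i \cdot v + c_i |v|^2)\mu_i$ with scalars $a_i, c_i$ and vectors $b_i \in \R^3$. The orthogonality $\boldf^\perp \perp \Ker\boldL$ translates into linear relations stating that the \emph{global} number, momentum, and energy vanish, while \emph{individual} species can carry momentum and energy provided the mixture sums cancel. The goal is then a constructive lower bound
\[
-\frac{1}{4}\sum_{i\neq j}\int B_{ij}\,\mu_i\mu_j^*\bigl(h_i' + h_j'^* - h_i - h_j^*\bigr)^2 \,\dd\sigma\,\dd v_*\,\dd v \geq \kappa \sum_{i\neq j} \bigl(|b_i - b_j|^2 + |c_i - c_j|^2\bigr),
\]
which, combined with the orthogonality relations, gives control of $\|\pi_i f_i^\perp\|^2$ for each $i$. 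The difficulty is that the collision rules involve the masses $m_i, m_j$ asymmetrically, so that $h_i' + h_j'^* - h_i - h_j^*$ is \emph{not} a mere difference of pre/post-collisional mono-species invariants; one must compute this quantity explicitly for $h_k \in \{1, v_\ell, |v|^2\}$ and extract the lower bound using the strict positivity (H4) of $b_{ij}$ on an open subset of $\Sf$.

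Finally, I would glue the diagonal estimate and the defect-subspace estimate via a convex splitting $\alpha\in(0,1)$ of the quadratic form, optimizing in $\alpha$ to produce a single explicit $\lambda_{\boldL}>0$ depending only on $N$, the masses $(m_i)$, and the kernels $(B_{ij})$. The passage from the $L^2_v(\pmb{\mu}^{-1/2})$-bound on $\pi_i f_i^\perp$ to the required $\spaceLvM$-bound is harmless because the defect components are polynomials in $v$ times Gaussians, for which the weight $\langle v\rangle^{\gamma/2}$ can be absorbed at the cost of a dimensional constant included in $\lambda_{\boldL}$.
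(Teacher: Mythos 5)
First, note that the paper does not prove Theorem \ref{SpectralGapL} at all: it is recalled verbatim from \cite{BriDau} (see also \cite{DauJunMouZam} for equal masses), and the body of the paper only uses it as a black box in Step 1 of the proof of Theorem \ref{MainResult}. So the comparison is with the proof in the cited reference, and your sketch is indeed a reconstruction of that strategy: reduction to $(\Ker\boldL)^{\perp}$ by self-adjointness, the symmetrized Dirichlet form, mono-species constructive spectral gaps for the diagonal terms $i=j$, and a coercivity estimate for the cross terms $i\neq j$ on the ``defect'' subspace $\bigoplus_i\mathrm{Im}(\pi_i)\ominus\Ker\boldL$ of dimension $4(N-1)$, glued by a convex splitting. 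This is the right architecture.

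However, as a proof there is a genuine gap, and one displayed claim is wrong as stated. The target inequality you write for the cross terms, with right-hand side $\kappa\sum_{i\neq j}\big(|b_i-b_j|^2+|c_i-c_j|^2\big)$, is false for unequal masses: the kernel of $\boldL$ contains $v_l\,(m_i\mu_i)_{1\leq i\leq N}$ and the analogous energy direction, i.e.\ elements with $b_i=m_i b$ and $c_i=m_i c$ for a common $b,c$; for such $\boldf$ the left-hand side vanishes while your right-hand side does not (unless all $m_i$ coincide). Using $m_i(v'-v)=-m_j(v_*'-v_*)$ and $m_i(|v'|^2-|v|^2)=-m_j(|v_*'|^2-|v_*|^2)$, one finds $h_i'+h_j'^*-h_i-h_j^*=\big(\tfrac{b_i}{m_i}-\tfrac{b_j}{m_j}\big)\cdot m_i(v'-v)+\big(\tfrac{c_i}{m_i}-\tfrac{c_j}{m_j}\big)\,m_i(|v'|^2-|v|^2)$ on the projected part, so the correct quantity to bound from below involves the mass-rescaled differences $|b_i/m_i-b_j/m_j|^2$ and $|c_i/m_i-c_j/m_j|^2$. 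More importantly, establishing that lower bound with an explicit $\kappa>0$ --- including the geometric argument exploiting (H4), the fact that the cross Dirichlet form also contains the non-hydrodynamic parts $(1-\pi_i)f_i$ whose mixed contributions must be absorbed into the diagonal estimate, and the passage to the $\langle v\rangle^{\gamma/2}$-weighted norm --- is precisely the technical core of \cite{BriDau}; in your proposal it is stated as a ``goal'' rather than carried out, so the hardest step of the theorem remains unproved.
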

 
In order to take advantage of this result, we choose to set our study in the space $\spaceLMR$. Following the path of its proof, it appears that we strongly need the fact that the operator $\boldL$ is self-adjoint in $\spaceLMR$, which is not the case for $\mathbf{\epsL}$. The lack of this property does not allow to directly apply the tools developed in \cite{BriDau} to show the existence of a spectral gap for $\mathbf{\epsL}$. In particular, one of the main issues we encountered is that we were not able to recover the form of $\Ker\mathbf{\epsL}$ in $\spaceLMR$, leaving the corresponding orthogonal projection $\pi_{\mathbf{\epsL}}$ onto $\Ker\mathbf{\epsL}$ undefined. This led us to confront the problem from a different point of view. 

The idea is to recover an estimate for the Dirichlet form of $\mathbf{\epsL}$ in $\spaceLMR$, by finding a way to link the two operators and use what is known about the spectral properties of $\boldL$ on this space. The price we have to pay is the appearence of two positive terms of order $\varepsilon$ in the estimate. The first one is a correction to the spectral gap $\lambda_{\boldL}$, which comes from the choice of working with the orthogonal projection $\pi_{\boldL}$ onto $\Ker\boldL$ in $\spaceLMR$, in place of $\pi_{\mathbf{\epsL}}$. In fact, the Dirichlet form of $\mathbf{\epsL}$ is taken with respect to the global equilibrium $\pmb{\mu}$ instead of a local Maxwellian $\mathbf{\epsM}$, and the correction to $\lambda_{\boldL}$ translates the fact that $\pmb{\mu}$ and $\mathbf{\epsM}$ are close in Euclidean norm, up to a factor of order $\varepsilon$. Moreover, a further consequence of our choice is that $\pi_{\boldL}(\boldf)\notin\Ker\mathbf{\epsL}$, meaning that $\mathbf{\epsL}(\pi_{\boldL}(\boldf))$ is not zero and has to be estimated, producing another contribution of order $\varepsilon$. The second term precisely takes into account this last feature.

\vspace{2mm}
Let us now state the main result that we shall prove.

\begin{teor}\label{MainResult}
Let the collision kernels $B_{ij}$ satisfy assumptions $\mathrm{(H1)-(H4)}$, and let $\lambda_{\boldL}>0$ be the spectral gap in $\spaceLMR$ of the operator $\boldL$ defined by (\ref{GlobalLinearized1}). There exists $C>0$ such that, for all $\varepsilon >0$, the operator $\mathbf{\epsL}$ defined by (\ref{LocalLinearized1}) satisfies the following \textrm{a priori} estimate. For almost every $t\geq 0$ and $x\in\T^3$, and for any $\boldf\in \spaceLMR$, 
\begin{equation}\label{MainEstimate}
\hspace{0.2cm}\big\langle \mathbf{\epsL}(\boldf), \boldf\big\rangle_{\spaceLM}\leq -\left(\lambda_{\boldL}-C\varepsilon\max_{1\leq i\leq N} \Big\{c_{i,\infty}^{1-\delta}|u_i(t,x)|\mathcal{R}_{\varepsilon, i}\Big\}\right)\|\boldf-\pi_{\boldL}(\boldf)\|_{\spaceLvM}^2
\end{equation}

\[
\hspace{0.6cm}+C\varepsilon\max_{1\leq i\leq N} \Big\{c_{i,\infty}^{1-\delta}|u_i(t,x)|\mathcal{R}_{\varepsilon, i}\Big\}\|\pi_{\boldL}(\boldf)\|_{\spaceLvM}^2,
\]
where $\pi_{\boldL}$ is the orthogonal projection onto $\Ker\boldL$ in $\spaceLMR$, and we have defined

\[
\mathcal{R}_{\varepsilon, i}=1+\varepsilon|u_i(t,x)|\exp\left\{\frac{4 m_i}{1-\delta}\varepsilon^2|u_i(t,x)|^2\right\},
\]
for some $\delta\in (0,1)$ independent of $\varepsilon$ to be appropriately chosen.

The constant $C$ is explicit and only depends on $N$, the different masses $(m_i)_{1\leq i\leq N}$, the collision kernels $(B_{ij})_{1\leq i,j\leq N}$ and the parameter $\delta\in (0,1)$. In particular, it does not depend on $\varepsilon$, the number of particles of the different species $(c_{i,\infty})_{1\leq i\leq N}$ and the macroscopic velocities $(u_i)_{1\leq i\leq N}$.
\end{teor}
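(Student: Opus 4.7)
The natural starting point is to split
\[
\big\langle\mathbf{\epsL}(\boldf),\boldf\big\rangle_{\spaceLM}=\big\langle\boldL(\boldf),\boldf\big\rangle_{\spaceLM}+\big\langle(\mathbf{\epsL}-\boldL)(\boldf),\boldf\big\rangle_{\spaceLM},
\]
and to apply Theorem \ref{SpectralGapL} to the first piece, which already furnishes the desired negative contribution $-\lambda_{\boldL}\|\boldf-\pi_{\boldL}(\boldf)\|_{\spaceLvM}^2$. The whole difficulty therefore reduces to producing an upper bound of the form $C\varepsilon\,(\cdots)\,\|\boldf\|_{\spaceLvM}^2$ for the remainder. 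Once this is available, writing $\|\boldf\|_{\spaceLvM}^2=\|\boldf-\pi_{\boldL}(\boldf)\|_{\spaceLvM}^2+\|\pi_{\boldL}(\boldf)\|_{\spaceLvM}^2$ (valid because $\pi_{\boldL}$ is the $\spaceLMR$-orthogonal projector and the weight $\langle v\rangle^{\gamma/2}$ preserves this decomposition after integration) and regrouping yields both the announced correction $-C\varepsilon$ on the spectral gap and the extra positive term carried by $\pi_{\boldL}(\boldf)$.

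\textbf{Reducing to small Maxwellian differences.} Using the bilinearity of each $Q_{ij}$ together with \eqref{GlobalLinearized1} and \eqref{LocalLinearized1}, one rewrites
\[
\epsL_i(\boldf)-L_i(\boldf)=\sum_{j=1}^N\Big(Q_{ij}(\epsM_i-\mu_i,f_j)+Q_{ij}(f_i,\epsM_j-\mu_j)\Big),
\]
so that everything boils down to controlling $Q_{ij}(g,h)$ when $g$ is a Maxwellian difference of size $\varepsilon$. A Taylor expansion of \eqref{EpsilonMaxwellian} around $\pmb{\mu}$ yields a pointwise bound of the form
\[
\big|\epsM_i(v)-\mu_i(v)\big|\leq C\,\varepsilon\,c_{i,\infty}^{1-\delta}\,|u_i|\,\mathcal{R}_{\varepsilon,i}\,\langle v\rangle\,\mu_i(v)^{1-\delta/2},
\]
for some $\delta\in(0,1)$ to be fixed later. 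The small loss $\delta$ on the Gaussian exponent is exactly what produces the factor $\mathcal{R}_{\varepsilon,i}$, since it has to compensate the translation of the mean velocity by $\varepsilon u_i$ which prevents a clean extraction of $\mu_i^{1/2}$.

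\textbf{Splitting into multiplicative and integral parts.} Following the classical decomposition, each $Q_{ij}(g,h)$ writes as the sum of a loss term $-h\,\nu_{ij}[g]$ and a gain term $Q_{ij}^+(g,h)$. For the loss contribution, hypotheses $\mathrm{(H2)}$--$\mathrm{(H4)}$ combined with the pointwise bound on $\epsM_i-\mu_i$ and standard Gaussian convolution estimates give $\nu_{ij}[\epsM_i-\mu_i](v)\leq C\varepsilon\,c_{i,\infty}^{1-\delta}|u_i|\,\mathcal{R}_{\varepsilon,i}\,\langle v\rangle^{\gamma}\mu_i(v)^{1-\delta}$, which, inserted against $\mu_i^{-1}$ and the weight $\langle v\rangle^{\gamma}$, produces a contribution bounded by $C\varepsilon\max_i\{c_{i,\infty}^{1-\delta}|u_i|\mathcal{R}_{\varepsilon,i}\}\,\|\boldf\|_{\spaceLvM}^2$. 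The gain term is genuinely integral, and here the key tool is the Carleman representation of $Q_{ij}^+$, which rewrites the $\sigma$-integral on $\Sf$ as an integration on a hyperplane with an explicit kernel in $(v,v_*)$. It allows to dominate $Q_{ij}^+(g,h)(v)$ pointwise by a convolution of $|g|$ and $|h|$ against an explicit kernel, after which Cauchy--Schwarz in $(v,v_*)$ weighted species-by-species by $\mu_i^{-1/2}$ and $\mu_j^{-1/2}$ produces the analogous $\varepsilon$-bound.

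\textbf{Main obstacle.} The delicate step is precisely this last one: because each species carries its own mass $m_i$ and its own shifted Gaussian $\epsM_i$, the exponents appearing in the Carleman kernel (driven by $m_i$, $m_j$ through the elastic collision rules \eqref{ConservationProperties}) do not combine cleanly with the weights $\mu_i^{-1/2}$, $\mu_j^{-1/2}$ of the ambient space $\spaceLMR$. One must therefore track carefully how the quadratic forms in $v,v_*$ behave under the change of variable on the collision hyperplane, and absorb the resulting cross-terms thanks to the small Gaussian loss $\delta$ introduced above; this is exactly what yields uniform-in-$\varepsilon$ $L^1_v$ controls at the price of the exponential factor in $\varepsilon^2|u_i|^2$ hidden in $\mathcal{R}_{\varepsilon,i}$. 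Summing over $i,j$, combining with the spectral gap estimate of Theorem \ref{SpectralGapL} and finally using the orthogonal decomposition $\boldf=\pi_{\boldL}(\boldf)+(\boldf-\pi_{\boldL}(\boldf))$ delivers the inequality \eqref{MainEstimate}, with a constant $C$ depending only on $N$, $(m_i)_i$, $(B_{ij})_{i,j}$ and $\delta$.
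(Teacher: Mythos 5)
Your proposal follows essentially the same route as the paper: penalization $\mathbf{\epsL}=\boldL+(\mathbf{\epsL}-\boldL)$, a pointwise Gaussian-type bound on $\epsM_i-\mu_i$ of order $\varepsilon$ (the paper's Lemma~\ref{Lemma1}), a decomposition of $\mathbf{\epsL}-\boldL$ into multiplicative pieces (the paper's $\mathcal{T}_3^\varepsilon,\mathcal{T}_4^\varepsilon$) and integral pieces (the paper's $\mathcal{T}_1^\varepsilon,\mathcal{T}_2^\varepsilon$), the latter dominated through the Carleman representation, and a careful treatment of the mismatched exponential rates coming from the different masses $m_i\neq m_j$. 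Your identification of the ``main obstacle'' is precisely what the paper's Lemmas~\ref{Lemma5} and~\ref{Lemma6} handle by completing the square in the kernel exponent and choosing $\delta$ close enough to $1$ so that the relevant quadratic form is negative definite.

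There is, however, an incorrect justification in your closing step. You assert the Pythagorean identity $\|\boldf\|_{\spaceLvM}^2=\|\boldf-\pi_{\boldL}(\boldf)\|_{\spaceLvM}^2+\|\pi_{\boldL}(\boldf)\|_{\spaceLvM}^2$, claiming that ``the weight $\langle v\rangle^{\gamma/2}$ preserves this decomposition after integration.'' It does not: $\pi_{\boldL}$ is orthogonal only in $\spaceLMR$, with weight $\pmb{\mu}^{-1/2}$, and in the differently weighted space $\spaceLvM$ the cross term $\langle\boldf-\pi_{\boldL}(\boldf),\pi_{\boldL}(\boldf)\rangle_{\spaceLvM}$ is generally nonzero. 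The estimate still follows by replacing the claimed equality with Young's inequality, $\|\boldf\|_{\spaceLvM}^2\leq 2\|\boldf-\pi_{\boldL}(\boldf)\|_{\spaceLvM}^2+2\|\pi_{\boldL}(\boldf)\|_{\spaceLvM}^2$, the extra factor of $2$ being harmless since it is absorbed into $C$; this is exactly what the paper does in Step~7. A smaller imprecision: the claimed bound $\nu_{ij}[\epsM_i-\mu_i](v)\leq C\varepsilon\,c_{i,\infty}^{1-\delta}|u_i|\mathcal{R}_{\varepsilon,i}\,\langle v\rangle^\gamma\mu_i(v)^{1-\delta}$ carries a spurious factor $\mu_i(v)^{1-\delta}$: the convolution of $|\cdot|^\gamma$ against a Gaussian does not decay but grows like $\langle v\rangle^\gamma$; the correct bound drops that factor, and the subsequent conclusion is unaffected.
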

 This result provides an estimate which is local in space and time, meaning that $(t,x)$ remains fixed. For the sake of simplicity, from now on we shall omit the explicit dependence of the macroscopic velocity $u_i=u_i(t,x)$ on both variables. 

\vspace{1mm}
As already mentioned, the main idea of the proof consists in linking $\mathbf{\epsL}$ to $\boldL$ in order to use the properties known on the linearized Boltzmann operator and apply Theorem \ref{SpectralGapL}. In view of this, we write $\mathbf{\epsL}$ as a sum $\mathbf{\epsL}=\boldL+(\mathbf{\epsL}-\boldL)$ and we separately treat the two addends. The operator $\boldL$ will be straightforwardly controlled using Theorem \ref{SpectralGapL}, producing the contribution $\lambda_{\boldL}$ in the estimate of Theorem \ref{MainResult}, and we shall need to prove that $\mathbf{\epsL}-\boldL$ can be bounded and gives the remaining contributions of order $\varepsilon$. 

In order to show the control on the penalty term, we first recover a bound of order $\varepsilon$ for the differences $\epsM_i-\mu_i$, proving that even if the local Maxwellian distributions (\ref{EpsilonMaxwellian}) constitute a non-equilibrium state for the Boltzmann multi-species operator, they are still very close to the global equilibrium $\pmb{\mu}$. To take advantage of this property, the penalty term will be further splitted into two parts that we shall study independently. The first one will be a mulplicative operator that shall be easily controlled, while the second one will be an integral operator that shall be written under a kernel form (using the Carleman representation) and estimated pointwise with the strategy proposed in \cite{BriDau}. Here, we will encounter a major issue. In fact, we shall need to manage the complete lack of symmetry between cross-interacting species (different atomic mass $m_i$ and macroscopic velocity $\varepsilon u_i$ for each species of the mixture), which will lead to intricate mixing effects in the exponential decay rates of the Maxwellian weights. To solve the difficulty we shall establish new accurate $L^1$ controls in the velocity variable for the kernel operator. This will involve proving new technical lemmata, but will finally allow to close the estimate and recover our result.

%$$$$$$$$$$$$$$$$$$$$$$$$$$$$$$$$$$$$$$$$$  SECTION 4 $$$$$$$$$$$$$$$$$$$$$$$$$$$$$$$$$$$$$$$$$$$$$$%

\section{Proof of Theorem \ref{MainResult}}

\iffalse
This section is dedicated to the proof of Theorem \ref{MainResult}. We start recalling some of the properties of the operator $\boldL$, the aim being to present the main tools that we will need in the following, first of all the result on the existence of an explicit spectral gap for the operator $\boldL$ in the space $\spaceLM$, proved in \cite{BriDau}.
% We will then focus our attention on the operator $\mathbf{\epsL}$, in order to prove our result.
\fi

This section is devoted to the proof of our main result. In order to explain in a clear way the procedure, we divide the proof into several steps. We first split the operator $\mathbf{\epsL}$ into the linearized Boltzmann operator $\boldL$ plus a penalization term $\mathbf{\epsL}-\boldL$, and we study their associated Dirichlet forms. We then use Theorem \ref{SpectralGapL} to bound the Dirichlet form of $\boldL$, while we note that the Dirichlet form of $\mathbf{\epsL}-\boldL$ produces four terms. Each of these terms will be bounded separately in a dedicated step, proving that they produce a contribution of order $\varepsilon$. We shall finally collect all the estimates to recover Theorem \ref{MainResult} in the last step of the proof. 

%To lighten the computations, we suppose to consider the operator $\boldL$ linearized around the global equilibria (\ref{GlobalEquilibrium2}), but we underline the fact that Theorem \ref{MainResult} holds for more general global equilibria (\ref{GlobalEquilibrium}). 

%%%%%%%%%%%%%%%%%%%%%%%%%%%%%%%%%%%%%%%%%%%%%%%%%%%%%%%%%%
\vspace{2mm}
\noindent \textbf{Step 1 -- Penalization of the operator $\mathbf{\epsL}$.} We consider the Dirichlet form of the operator $\mathbf{\epsL}$ in the space $\spaceLMR$, and we apply a splitting based on a penalization by the operator $\boldL$. More precisely, using Theorem \ref{SpectralGapL} we successively get, for any $\boldf\in\spaceLMR$,

\[
\hspace{-2.7cm}\big\langle \mathbf{\epsL}(\boldf), \boldf\big\rangle_{\spaceLM}=\big\langle \boldL(\boldf), \boldf\big\rangle_{\spaceLM}+\big\langle \mathbf{\epsL}(\boldf)-\boldL(\boldf), \boldf\big\rangle_{\spaceLM}
\]

\[
\hspace{3cm}\leq -\lambda_{\boldL}\|\boldf-\pi_{\boldL}(\boldf)\|_{\spaceLvM}^2+\big\langle \mathbf{\epsL}(\boldf)-\boldL(\boldf), \boldf\big\rangle_{\spaceLM},
\]
Then the penalty term writes
\[
\big\langle \mathbf{\epsL}(\boldf)-\boldL(\boldf), \boldf\big\rangle_{\spaceLM}=\sum_{i,j} \int_{\R^6\times\Sf}B_{ij}(|v-v_*|,\cos\theta) \Delta_{ij}^{\varepsilon}f_i \mu_i^{-1}\dd\sigma\dd v_*\dd v,
\]
with the shorthand notation

\[
\Delta_{ij}^{\varepsilon}:=\big({\epsM_i}^{\prime}-\mu_i^{\prime}\big)f_j^{\prime *}+\big({\epsM_j}^{\prime *}-\mu_j^{\prime *}\big)f_i^{\prime}-\big({\epsM_i}-\mu_i\big)f_j^{*}-\big({\epsM_j}^{*}-\mu_j^{*}\big)f_i.
\]
By simply taking the modulus, we can bound from above $\Delta_{ij}^{\varepsilon}$ in this way

\[
\Delta_{ij}^{\varepsilon}\leq\big|{\epsM_i}^{\prime}-\mu_i^{\prime}\big|\big|f_j^{\prime *}\big|+\big|{\epsM_j}^{\prime *}-\mu_j^{\prime *}\big|\big|f_i^{\prime}\big|+\big|{\epsM_i}-\mu_i\big|\big|f_j^{*}\big|+\big|{\epsM_j}^{*}-\mu_j^{*}\big|\big|f_i\big|,
\]
and to conclude the proof we shall need to control each of the four terms appearing in the inequality. For the sake of simplicity, we will separately consider the different terms, and we set
\newpage
\[
\hspace{-1mm}\mathcal{T}_1^{\varepsilon}(\boldf)=\sum_{i,j} \int_{\R^6\times\Sf}B_{ij}(|v-v_*|,\cos\theta)\big|{\epsM_i}^{\prime}-\mu_i^{\prime}\big|\big|f_j^{\prime *}\big|\big|f_i\big| \mu_i^{-1}\dd\sigma\dd v_*\dd v,
\]
\[
\mathcal{T}_2^{\varepsilon}(\boldf)=\sum_{i,j} \int_{\R^6\times\Sf}B_{ij}(|v-v_*|,\cos\theta) \big|{\epsM_j}^{\prime *}-\mu_j^{\prime *}\big|\big|f_i^{\prime}\big|\big|f_i \big|\mu_i^{-1}\dd\sigma\dd v_*\dd v,
\]
\[
\hspace{-3mm}\mathcal{T}_3^{\varepsilon}(\boldf)=\sum_{i,j} \int_{\R^6\times\Sf}B_{ij}(|v-v_*|,\cos\theta) \big|{\epsM_i}-\mu_i\big|\big|f_j^{*}\big|\big|f_i \big|\mu_i^{-1}\dd\sigma\dd v_*\dd v,
\]
\[
\hspace{-0.9cm}\mathcal{T}_4^{\varepsilon}(\boldf)=\sum_{i,j} \int_{\R^6\times\Sf}B_{ij}(|v-v_*|,\cos\theta) \big|{\epsM_j}^{*}-\mu_j^{*}\big|f_i^2 \mu_i^{-1}\dd\sigma\dd v_*\dd v.
\]

%%%%%%%%%%%%%%%%%%%%%%%%%%%%%%%%%%%%%%%%%%%%%%%%%%%%%%%%%%
\vspace{2mm}
\noindent \textbf{Step 2 -- Preliminary lemmata.} In order to bound each component, we collect in this step two useful preliminary results. The first one yields an estimate which allows to compare the growth of the local Maxwellians given in (\ref{EpsilonMaxwellian}) with the growth of the global Maxwellians (\ref{GlobalEquilibrium}).

\begin{lemm}\label{Lemma1}
Consider the local Maxwellian densities $\mathbf{\epsM}$ defined in (\ref{EpsilonMaxwellian}) and the global equilibria $\boldM$ introduced in (\ref{GlobalEquilibrium}). For almost every $t\geq 0$, $x\in\T^3$ and for all $\delta\in(0,1)$, there exists an explicit constant $C_{\delta}>0$ (independent of $\varepsilon$, of the number of particles $c_{i,\infty}$ and of the macroscopic velocities $u_i$) such that, for all $\varepsilon >0$ and $1\leq i\leq N$,

\begin{equation}\label{Estimate1}
\big|\epsM_i(v)-\mu_i(v)\big|\leq C_{\delta}\varepsilon c_{i,\infty}^{1-\delta}|u_i|\mathcal{R}_{\varepsilon, i}\mu_i^{\delta}(v),\quad\forall\ v\in\R^3,
\end{equation}
recalling that we have defined for simplicity
\[
\mathcal{R}_{\varepsilon, i}=1+\varepsilon|u_i|\exp\left\{\frac{4 m_i}{1-\delta}\varepsilon^2|u_i|^2\right\}.
\]
\end{lemm}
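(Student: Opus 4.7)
The proof is a direct algebraic manipulation. I would first factor
\[
\epsM_i(v) - \mu_i(v) = \mu_i(v)\bigl(e^{\alpha_i(v)} - 1\bigr), \qquad \alpha_i(v) := m_i \varepsilon\, v\cdot u_i - \tfrac{1}{2}m_i \varepsilon^2 |u_i|^2,
\]
and apply the elementary inequality $|e^{\alpha}-1|\leq |\alpha|\,e^{|\alpha|}$ with the triangular bound $|\alpha_i(v)|\leq m_i\varepsilon|v||u_i|+\tfrac{1}{2}m_i\varepsilon^2|u_i|^2$. Next I would expose the target weight by writing
\[
\mu_i(v) = c_{i,\infty}^{1-\delta}\Bigl(\tfrac{m_i}{2\pi}\Bigr)^{\!3(1-\delta)/2} e^{-(1-\delta)m_i|v|^2/2}\,\mu_i^{\delta}(v),
\]
so that the prefactor $c_{i,\infty}^{1-\delta}\mu_i^{\delta}(v)$ demanded by the claim arises automatically.

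The remaining task is to estimate the $v$-dependent quantity $e^{-(1-\delta)m_i|v|^2/2}\bigl(m_i\varepsilon|v||u_i| + \tfrac{1}{2}m_i\varepsilon^2|u_i|^2\bigr)e^{m_i\varepsilon|v||u_i|+m_i\varepsilon^2|u_i|^2/2}$ by a constant times $\varepsilon|u_i|\mathcal{R}_{\varepsilon,i}$. The tool is completing the square: the elementary inequality
\[
-\tfrac{(1-\delta)m_i}{2}|v|^2 + m_i\varepsilon|v||u_i|\leq \tfrac{m_i\varepsilon^2|u_i|^2}{2(1-\delta)}
\]
handles the summand $\tfrac{1}{2}m_i\varepsilon^2|u_i|^2$ directly. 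For the summand carrying the additional factor $|v|$, I would split the Gaussian decay in half, use one half to complete the square against $m_i\varepsilon|v||u_i|$, and use the other to absorb $|v|$ through the universal bound $|v|\,e^{-\eta m_i|v|^2/2}\leq C/\sqrt{\eta m_i}$. Both computations produce a factor of the form $\exp\bigl(c\, m_i\varepsilon^2|u_i|^2/(1-\delta)\bigr)$ with an explicit constant $c\leq 3/2$.

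The main (and mild) obstacle is the final bookkeeping that identifies the resulting bound with $\mathcal{R}_{\varepsilon,i}=1+\varepsilon|u_i|\exp\{\tfrac{4m_i}{1-\delta}\varepsilon^2|u_i|^2\}$. After factoring $\varepsilon|u_i|$ out of the combined estimate, the surviving factor takes the shape $e^{c\,m_i\varepsilon^2|u_i|^2/(1-\delta)} + \varepsilon|u_i|\,e^{c\,m_i\varepsilon^2|u_i|^2/(1-\delta)}$, which I would dominate by $C\,\mathcal{R}_{\varepsilon,i}$ via a case split: when $\varepsilon|u_i|\leq 1$ both exponentials are bounded by a constant depending only on $m_i$ and $\delta$, absorbed by the $1$ of $\mathcal{R}_{\varepsilon,i}$; when $\varepsilon|u_i|\geq 1$, the generous coefficient $4$ in the exponent of $\mathcal{R}_{\varepsilon,i}$ (versus $c\leq 3/2$) combined with the free linear factor $\varepsilon|u_i|\geq 1$ dominates both terms. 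Collecting constants yields the stated inequality with $C_\delta$ depending only on the $m_i$ and $\delta$, independently of $\varepsilon$, $c_{i,\infty}$ and $u_i$.
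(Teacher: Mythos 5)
Your proof is correct, and it takes a genuinely different route from the paper's. The paper represents $\epsM_i(v)-\mu_i(v)$ as $\int_0^1 m_i\langle v-s\varepsilon u_i,\varepsilon u_i\rangle\,\mu_i(v-s\varepsilon u_i)\,\dd s$ via the fundamental theorem of calculus, pulls out $\mu_i^\delta(v)$ by bounding the product $\mu_i(v-s\varepsilon u_i)\mu_i^{-\delta}(v)$, and then produces the two summands of $\mathcal{R}_{\varepsilon,i}$ through a case split on the \emph{velocity}: for $|v|\geq 4\varepsilon|u_i|/(1-\delta)$ the remaining factor is a bounded Gaussian expression (giving the ``$1$'' in $\mathcal{R}_{\varepsilon,i}$), while for $|v|<4\varepsilon|u_i|/(1-\delta)$ one plugs the bound on $|v|$ into the exponent and obtains $\varepsilon|u_i|\exp\{4m_i\varepsilon^2|u_i|^2/(1-\delta)\}$. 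You instead factor $\epsM_i-\mu_i=\mu_i(e^{\alpha_i}-1)$, use $|e^\alpha-1|\leq|\alpha|e^{|\alpha|}$, expose $c_{i,\infty}^{1-\delta}\mu_i^\delta$ directly from $\mu_i=\mu_i^{1-\delta}\mu_i^\delta$, absorb the surviving Gaussian factor by completing the square, and only at the very end split on the size of $\varepsilon|u_i|$ rather than of $|v|$. Both arguments are elementary, constructive, and give comparable constants; yours dispenses with the integral representation and makes the emergence of the $\mu_i^\delta$ weight more transparent, at the cost of a slightly longer bookkeeping step to match $\mathcal{R}_{\varepsilon,i}$ at the end, whereas the paper's $|v|$-split produces the two terms of $\mathcal{R}_{\varepsilon,i}$ directly with no further comparison needed.
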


\begin{proof}
Fix $\delta\in (0,1)$. Noticing that $\epsM_i(v)=\mu_i(v-\varepsilon u_i)$ and using Cauchy-Schwarz inequality, we obtain, for any $v\in\R^3$,

\[
\hspace{-3.5cm}\big|\epsM_i(v)-\mu_i(v)\big|=\left|\int_0^1 m_i \langle v-s\varepsilon u_i,\varepsilon u_i\rangle \mu_i(v-s\varepsilon u_i)\dd s\right|
\] 
\[
\hspace{1.2cm}\leq \varepsilon |u_i|\mu_i^{\delta}(v) \int_0^1 m_i |v-s\varepsilon u_i | \mu_i(v-s\varepsilon u_i)\mu_i^{-\delta}(v)\dd s.
\]
The product of the two Maxwellians inside the integral can be upper-estimated as

\[
\mu_i(v-s\varepsilon u_i)\mu_i^{-\delta}(v)\leq c_{i,\infty}^{1-\delta}\left(\frac{m_i}{2\pi}\right)^{3(1-\delta)/2}\exp\left\{-(1-\delta)\frac{m_i}{2}|v|^2+s\varepsilon m_i|u_i||v|\right\},
\]
Consequently, the previous integral is bounded by
\[
\hspace{-0.2cm}\int_0^1 m_i |v-s\varepsilon u_i | \mu_i(v-s\varepsilon u_i)\mu_i^{-\delta}(v)\dd s\leq m_i c_{i,\infty}^{1-\delta}\left(\frac{m_i}{2\pi}\right)^{\frac{3(1-\delta)}{2}}\Big(|v|+\varepsilon|u_i|\Big)e^{-(1-\delta)\frac{m_i}{2}|v|^2+\varepsilon m_i |u_i||v|}.
\]
Starting from this last inequality, we distinguish two cases:

\begin{enumerate}
\item For $\displaystyle |v|\geq \frac{4 \varepsilon|u_i|}{1-\delta}$, we have
\[
\int_0^1 m_i |v-s\varepsilon u_i | \mu_i(v-s\varepsilon u_i)\mu_i^{-\delta}(v)\dd s\leq m_i \left(\frac{m_i}{2\pi}\right)^{\frac{3(1-\delta)}{2}}\frac{5-\delta}{4}|v|\exp\left\{-(1-\delta)\frac{m_i}{4}|v|^2\right\}
\]
which is clearly upper bounded since $\delta\in (0,1)$.
\item For $\displaystyle |v|< \frac{4 \varepsilon|u_i|}{1-\delta}$, we get
\[
\int_0^1 m_i |v-s\varepsilon u_i | \mu_i(v-s\varepsilon u_i)\mu_i^{-\delta}(v)\dd s\leq m_i \left(\frac{m_i}{2\pi}\right)^{\frac{3(1-\delta)}{2}} \frac{5-\delta}{1-\delta}\varepsilon|u_i|\exp\left\{\frac{4 m_i}{1-\delta}\varepsilon^2|u_i|^2\right\}.
\]
\end{enumerate}
\noindent For all $v\in\R^3$, we finally deduce the bound 
\[
\int_0^1 m_i |v-s\varepsilon u_i | \mu_i(v-s\varepsilon u_i)\mu_i^{-\delta}(v)\dd s\leq C_\delta \left(1+\varepsilon|u_i|\exp\left\{\frac{4 m_i}{1-\delta}\varepsilon^2|u_i|^2\right\}\right)
\]
by simply taking
\[
C_\delta = \max_{1\leq i\leq N}  m_i \left(\frac{m_i}{2\pi}\right)^{\frac{3(1-\delta)}{2}}(5-\delta)\left(\sup_{|v|\in\R^+}\left(\frac{|v|}{4}e^{-(1-\delta)\frac{m_i}{4}|v|^2}\right)+\frac{5-\delta}{1-\delta}\right) >0,
\]
which concludes the proof.
\end{proof}

We also present a second important result that we shall use in the following. In that end, we first introduce an additional property of the linearized Boltzmann operator.

The operator $\boldL$ can be written under the form
\[
\boldL=\mathbf{K}-\pmb{\nu},
\]
where $\mathbf{K}=(K_1,\ldots,K_N)$ is a compact operator \cite{BouGrePavSal} defined, for any $1\leq i\leq N$, by
\begin{equation}\label{OperatorK}
K_i(\boldf)(v)=\sum_{j=1}^N\int_{\R^3\times\Sf}B_{ij}(|v-v_*|,\cos\theta)\Big(\mu_i^{\prime}f_j^{\prime *}+\mu_j^{\prime *}f_i^{\prime}-\mu_i f_j^*\Big)\dd\sigma\dd v_*,\hspace{0.5cm}\forall\ v\in\R^3 ,
\end{equation}
and $\pmb{\nu}=(\nu_1,\ldots ,\nu_N)$ acts like a multiplicative operator, namely, for any $1\leq i\leq N$,
\[
\nu_i(\boldf)(v)=\sum_{j=1}^N \nu_{ij}(v)f_i(v),
\]
\begin{equation}\label{CollisionFrequency}
\nu_{ij}(v)=\int_{\R^3} B_{ij}\big(|v-v_*|,\cos\theta\big)\mu_j(v_*)\dd\sigma\dd v_*,\hspace{0.5cm} \forall\ v\in\R^3.
\end{equation}
It is well-known \cite{Cer, CerIllPul, Vil} that, under our assumptions on the collision kernels $B_{ij}$, each of the so-called collision frequencies $\nu_{ij}$ satisfies the following lower and upper bounds. For all $1\leq i,j\leq N$, there exist two explicit positive constants $\bar{\nu}_{ij}, \tilde{\nu}_{ij}>0$  such that 

\begin{equation}\label{BoundCF}
0< \bar{\nu}_{ij}\leq \nu_{ij}(v)\leq \tilde{\nu}_{ij}\left(1+|v|^{\gamma}\right),\hspace{0.5cm}\forall\ v\in\R^3.
\end{equation}

The lemma we then recall has been proved in \cite[Lemma 5.1]{BriDau} (see also \cite{BouGrePavSal}), and states that the operator $\mathbf{K}$ can be written under a kernel form. The authors actually prove a more general result, showing that the full Boltzmann multi-species operator (and not only its linearized version) has a Carleman representation \cite{Car}.

\begin{lemm}\label{Lemma2}
For any $\boldf=(f_1,\ldots ,f_N)\in\spaceLMR$, the following Carleman representations hold:
\[
\hspace{-8cm}\int_{\R^3\times\Sf}B_{ij}(|v-v_*|,\cos\theta)f_i^{\prime *}f_j^{\prime}\ \dd\sigma\dd v_*= 
\]
\[
C_{ij} \int_{\R^3}\left(\frac{1}{|v-v_*|}\int_{\tilde{E}^{ij}_{v v_*}}\frac{B_{ij}\left(v-V(u,v_*),\frac{v_*-u}{|u-v_*|}\right)}{|u-v_*|}f_i(u)\ \dd \tilde{E}(u)\right)f_j^*\dd v_*,
\]

\vspace{2mm}
\[
\hspace{-8cm}\int_{\R^3\times\Sf}B_{ij}(|v-v_*|,\cos\theta)f_i^{\prime}f_j^{\prime *}\ \dd\sigma\dd v_*= 
\]
\[
C_{ji} \int_{\R^3}\left(\frac{1}{|v-v_*|}\int_{E^{ij}_{v v_*}}\frac{B_{ij}\left(v-V(v_*,u),\frac{u-v_*}{|u-v_*|}\right)}{|u-v_*|}f_j(u)\ \dd E(u)\right)f_i^*\dd v_*,
\]
where we have defined $V(u,v_*)=v_*+m_i m_j^{-1}u-m_i m_j^{-1}v$ and called $C_{ij},C_{ji}>0$ some explicit constants which only depend on the masses $m_i, m_j$. Moreover, we have denoted by $\dd E$ the Lebesgue measure on the hyperplane $E_{v v_*}^{ij}$, orthogonal to $v-v_*$ and passing through 
\[
V_E(v,v_*)=\frac{m_i+m_j}{2 m_j}v-\frac{m_i-m_j}{2m_j}v_*,
\]
and by $\dd \tilde{E}$ the Lebesgue measure on the space $\tilde{E}_{v v_*}^{ij}$ which corresponds to the hyperplane $E_{v v_*}^{ij}$, whenever $m_i=m_j$, and to the sphere of radius
\[
R_{v v_*}=\frac{m_j}{|m_i-m_j|}|v-v_*|
\]
and centred at 
\[
O_{v v_*}=\frac{m_i}{m_i-m_j}v-\frac{m_j}{m_i-m_j}v_*,
\]
whenever $m_i\neq m_j$. 
\end{lemm}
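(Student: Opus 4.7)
The plan is to derive both Carleman representations by a two-step change of variables, exchanging the pair $(v_*,\sigma)\in\R^3\times\Sf$ for a pair $(u,v_*)\in\R^3\times\R^3$ in which $u$ plays the role of one of the pre-collisional velocities; the angular integration then transforms into an integration over a hyperplane (when $m_i=m_j$) or a sphere (when $m_i\neq m_j$), via a Fubini-type swap. For definiteness, first focus on the first identity and set $u=v_*'$ so that $f_i^{\prime *}=f_i(u)$. Then $f_j^{\prime}$ becomes a function of $u$ and $v_*$ through the conservation relations $m_iv+m_jv_*=m_iv'+m_ju$.

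The first step is to parametrize, for fixed $v$ and $v_*$, the $\sigma$-integration by the point $u\in\R^3$ obtained from the collision rules. As $\sigma$ sweeps $\Sf$, $u$ sweeps the sphere centered at $(m_iv+m_jv_*)/(m_i+m_j)$ of radius $m_i|v-v_*|/(m_i+m_j)$; a direct computation gives
\[
\dd\sigma = \frac{(m_i+m_j)^2}{m_i^2|v-v_*|^2}\,\dd S(u),
\]
and, from the definition of $\sigma$, the cosine of the deviation angle rewrites as a function of $(u-v_*)/|u-v_*|$ and $v-v_*$, turning the kinetic factor of $B_{ij}$ into $B_{ij}\!\left(v-V(u,v_*),\frac{v_*-u}{|u-v_*|}\right)$ with $V(u,v_*)=v_*+m_im_j^{-1}(u-v)$, which matches the expression announced in the lemma.

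The second step is the Fubini swap: for fixed $v$ and $u$, I determine the locus of admissible $v_*$. Substituting $v_*'=u$ in the conservation identities (\ref{ConservationProperties}) and simplifying yields a quadratic equation in $v_*$. In the equal-mass case the quadratic terms cancel and the equation defines the hyperplane $E_{vv_*}^{ij}$ orthogonal to $v-v_*$ and passing through $V_E(v,v_*)$; in the unequal-mass case, completing the square produces the sphere $\tilde E_{vv_*}^{ij}$ of centre $O_{vv_*}$ and radius $R_{vv_*}$ stated in the lemma. The transverse Jacobian of the swap picks up the factors $1/|v-v_*|$ and $1/|u-v_*|$ appearing in the final formula. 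For the second identity the argument is entirely parallel, with the choice $u=v'$ in place of $u=v_*'$ and the roles of $i$ and $j$ interchanged; the relevant sphere $E_{vv_*}^{ij}$ then reappears with the opposite sign convention $(u-v_*)/|u-v_*|$.

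The main obstacle I anticipate is the Jacobian bookkeeping, in particular checking that the cosine of the deviation angle $\theta$, initially given by $\sigma\cdot(v-v_*)/|v-v_*|$, coincides (up to sign) with $(v_*-u)/|u-v_*|\cdot(v-v_*)/|v-v_*|$ after the change of variables, uniformly over both geometric cases ($m_i=m_j$ and $m_i\neq m_j$); the sphere case in particular requires careful attention to the centre $O_{vv_*}$ and radius $R_{vv_*}$, since a spurious factor of $|m_i-m_j|$ can easily appear in the Jacobian and must be reabsorbed into the explicit constants $C_{ij}$, $C_{ji}$. Once this is settled the two representations follow directly.
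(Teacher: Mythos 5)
Your overall plan (trade the $\sigma$-integration for an integration over pre-collisional velocities, then read the conservation laws as the equation of a hyperplane or a sphere) is indeed the standard Carleman argument behind this lemma; note that the paper itself gives no proof but simply quotes \cite[Lemma 5.1]{BriDau}, so a self-contained derivation along these lines is the right ambition. However, your implementation misidentifies the variables, and as described it would not produce the stated formulas. In the target representation the \emph{original} post-collisional velocity $v_*$ is eliminated altogether: momentum conservation expresses it as $V(u,v_*)=v_*+\tfrac{m_i}{m_j}(u-v)$ in terms of the two pre-collisional velocities, one of which becomes the free outer variable (renamed $v_*$ and carrying the function outside the bracket), while the other, $u$, is constrained by energy conservation to $E^{ij}_{vv_*}$ or $\tilde E^{ij}_{vv_*}$ and carries the function inside. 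You instead keep the original $v_*$ active: after parametrizing the $\sigma$-sphere by $u=v_*'$ (that step, including $\dd\sigma=\tfrac{(m_i+m_j)^2}{m_i^2|v-v_*|^2}\dd S(u)$, is correct), you propose to ``determine the locus of admissible $v_*$ for fixed $(v,u)$''. This swap leaves the second function evaluated at $v'=v+\tfrac{m_j}{m_i}(v_*-u)$, which depends on \emph{both} variables, so the outcome is not a kernel form of the stated type; and the locus you would obtain is a quadric in the original $v_*$ parametrized by $(v,u)$, not the manifold of the lemma, whose base point $V_E(v,v_*)$, centre $O_{vv_*}$ and radius $R_{vv_*}$ are written in terms of the renamed (pre-collisional) $v_*$.

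Two further concrete problems. First, with your identification the kinetic factor simply remains $\Phi_{ij}(|v-v_*|)$; the rewriting as $B_{ij}\bigl(v-V(u,v_*),\cdot\bigr)$ is only meaningful when $u$ stands for $v'$ and the new $v_*$ for $v_*'$, in which case $V(u,v_*)$ is exactly the eliminated post-collisional partner velocity and $|v-V(u,v_*)|=|u-v_*|$. Second, your assignment of the two geometries is inverted: once the original $v_*$ is integrated out, energy conservation reads $|v'|^2-|v|^2=\tfrac{m_i}{m_j}|v'-v|^2+2(v'-v)\cdot v_*'$, which is \emph{linear} in $v_*'$ (hyperplane through $V_E$, for all masses) when $v'$ is kept free, and \emph{quadratic} in $v'$ (the sphere of centre $O_{vv_*}$ and radius $R_{vv_*}$ when $m_i\neq m_j$) when $v_*'$ is kept free; so the hyperplane representation corresponds to $u=v_*'$ and the sphere representation to $u=v'$, the opposite of your choices. (In fairness, the left-hand side of the first identity as printed, $f_i^{\prime *}f_j^{\prime}$, is inconsistent with how the lemma is used in Step 6: the $\tilde E$-representation is that of $\int B_{ij}f_i^{\prime}f_j^{\prime *}$ with $u$ in the role of $v'$, and this misprint may have steered you towards $u=v_*'$.) Finally, the factors $\tfrac{1}{|v-v_*|}$ and $\tfrac{1}{|u-v_*|}$ are not a ``transverse Jacobian'' of your swap: they come from using the momentum constraint to remove the original $v_*$ and resolving the remaining energy constraint by co-area (Dirac-delta) calculus. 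The clean repair is to fix $v$, write the collision integral with momentum and energy deltas, integrate out the original $v_*$, and then choose which of the two remaining pre-collisional velocities is kept free; both identities, with the mass-dependent constants $C_{ij},C_{ji}$, then follow at once.
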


Let us now pass to the actual estimate of the different terms $\mathcal{T}_i^{\varepsilon}$. For the sake of simplicity, we shall call $C^{(i)}$ any constant bounding the corresponding term $\mathcal{T}_i^{\varepsilon}$ independently of $\varepsilon$ and $u_i$, but each of these constants will be explicit in the sense that all steps in our proofs are constructive and explicit computations of the constants could be extracted from them.

%%%%%%%%%%%%%%%%%%%%%%%%%%%%%%%%%%%%%%%%%%%%%%%%%%%%%%%%%%
\vspace{2mm}
\noindent \textbf{Step 3 -- Estimation of \hspace{0.5mm}$\bm{\mathcal{T}_4^{\varepsilon}}$.} Applying Lemma \ref{Lemma1} for any fixed $\delta\in (0,1)$ and using the assumptions made on the collisional kernels, the term can be estimated as
\[
\hspace{-1.7cm}\mathcal{T}_4^{\varepsilon}(\boldf)=\sum_{i,j} \int_{\R^6\times\Sf}B_{ij}(|v-v_*|,\cos\theta) \big|{\epsM_j}^{*}-\mu_j^{*}\big|f_i^2 \mu_i^{-1}\dd\sigma\dd v_*\dd v
\]
\[
\hspace{-0.2cm}\leq \varepsilon C^{(4)}\sum_{i,j} c_{j,\infty}^{1-\delta}|u_j|\mathcal{R}_{\varepsilon, j}\int_{\R^3}\left(\int_{\R^3}|v-v_*|^{\gamma}{\mu_j^*}^{\delta}\dd v_*\right){f_i}^2\mu_i^{-1}\dd v
\]
\[
\hspace{1.4cm}\leq \varepsilon C^{(4)} \max_{1\leq i\leq N} \Big\{c_{i,\infty}^{1-\delta}|u_i|\mathcal{R}_{\varepsilon, i}\Big\}\sum_{i,j}\int_{\R^3}\left(\int_{\R^3}|v-v_*|^{\gamma}{\mu_j^*}^{\delta}\dd v_*\right){f_i}^2\mu_i^{-1}\dd v.
\]
Moreover, we notice that for all $1\leq j\leq N$, the integral term 
\[
\nu_{j}^{(\delta)}= \int_{\R^3}|v-v_*|^{\gamma}{\mu_j^*}^{\delta}\dd v_*
\]
has the same form of the collision frequencies $\nu_{ij}$ and it consequently satisfies an estimate similar to (\ref{BoundCF}), namely there exist two strictly positive explicit constants $\bar{\nu}_{j}^{(\delta)}, \tilde{\nu}_{j}^{(\delta)}>0$  such that 
\[
0<\bar{\nu}_{j}^{(\delta)}\leq \nu_{j}^{(\delta)}(v)\leq \tilde{\nu}_{j}^{(\delta)}\left(1+|v|^{\gamma}\right), \quad\forall v\in\R^3.
\]
Using the fact that $\left(1+|v|^{\gamma}\right)\sim\langle v\rangle^{\gamma}$, we finally deduce
\[
\mathcal{T}_4^{\varepsilon}\leq \varepsilon C^{(4)} \max_{1\leq i\leq N} \Big\{c_{i,\infty}^{1-\delta}|u_i|\mathcal{R}_{\varepsilon, i}\Big\} \sum_{i=1}^N\int_{\R^3}\langle v\rangle^{\gamma}f_i^2 \mu_i^{-1}\dd v
\]
\begin{equation}\label{EstimationK4}
\hspace{0.3cm}\leq \varepsilon C^{(4)} \max_{1\leq i\leq N} \Big\{c_{i,\infty}^{1-\delta}|u_i|\mathcal{R}_{\varepsilon, i}\Big\} \big\|\boldf\big\|_{\spaceLvM}^2.
\end{equation}

%%%%%%%%%%%%%%%%%%%%%%%%%%%%%%%%%%%%%%%%%%%%%%%%%%%%%%%%%%
\vspace{2mm}
\noindent \textbf{Step 4 -- Estimation of \hspace{0.5mm}$\bm{\mathcal{T}_3^{\varepsilon}}$.} Similarly, we can rewrite $\mathcal{T}_3^{\varepsilon}$ in term of $\mathbf{h}=(h_1,\ldots ,h_N)$ with $h_i =f_i \mu_i^{-1/2}$ for any $1\leq i\leq N$, and obtain

\[
\hspace{-1.4cm}\mathcal{T}_3^{\varepsilon}(\boldf)=\sum_{i,j} \int_{\R^6\times\Sf}B_{ij}(|v-v_*|,\cos\theta) \big|{\epsM_i}-\mu_i\big|\mu_i^{-1/2}{\mu_j^*}^{1/2}\big|h_j^{*}\big|\big|h_i\big|\dd\sigma\dd v_*\dd v
\]
\[
\hspace{0.1cm}\leq \varepsilon C^{(3)}\max_{1\leq i\leq N} \Big\{c_{i,\infty}^{1-\delta}|u_i|\mathcal{R}_{\varepsilon, i}\Big\}\sum_{i,j}\int_{\R^6} |v-v_*|^{\gamma} \mu_i^{\frac{2\delta-1}{2}}{\mu_j^*}^{1/2}\big|h_j^{*}\big|\big|h_i\big|\dd v_*\dd v,
\]
where this time we have used Lemma \ref{Lemma1} with $\delta\in(1/2,1)$. Applying Young's inequality and Fubini's theorem allows to conclude that

\[
\mathcal{T}_3^{\varepsilon}(\boldf)\leq \varepsilon C^{(3)}\max_{1\leq i\leq N} \Big\{c_{i,\infty}^{1-\delta}|u_i|\mathcal{R}_{\varepsilon, i}\Big\}\sum_{i,j}\left(\int_{\R^6} |v-v_*|^{\gamma} \mu_i^{2\delta -1}{h_j^*}^2\dd v\dd v_* + \int_{\R^6} |v-v_*|^{\gamma} \mu_j^* h_i^2\dd v_*\dd v\right)
\]
\[
\hspace{-0.5cm}\leq \varepsilon C^{(3)}\max_{1\leq i\leq N} \Big\{c_{i,\infty}^{1-\delta}|u_i|\mathcal{R}_{\varepsilon, i}\Big\}\left(\sum_{j=1}^N\int_{\R^3} \langle v_* \rangle^{\gamma}{f_j^*}^2 {\mu_j^*}^{-1}\dd v_* + \sum_{i=1}^N\int_{\R^3} \langle v \rangle^{\gamma} f_i^2\mu_i^{-1}\dd v\right)
\]

\vspace{3mm}
\begin{equation}\label{EstimationK3}
\hspace{-5.9cm}\leq \varepsilon C^{(3)} \max_{1\leq i\leq N} \Big\{c_{i,\infty}^{1-\delta}|u_i|\mathcal{R}_{\varepsilon, i}\Big\} \big\|\boldf\big\|_{\spaceLvM}^2.
\end{equation}

%%%%%%%%%%%%%%%%%%%%%%%%%%%%%%%%%%%%%%%%%%%%%%%%%%%%%%%%%%
\vspace{2mm}
\noindent \textbf{Step 5 -- Estimation of \hspace{0.5mm}$\bm{\mathcal{T}_2^{\varepsilon}}$.} In order to estimate $\mathcal{T}_2^{\varepsilon}$, we first apply Lemma \ref{Lemma1}, like in the case of $\mathcal{T}_4^{\varepsilon}$, to bound the difference between the local and the global Maxwellians. We have 

\[
\hspace{-3.3cm}\mathcal{T}_2^{\varepsilon}(\boldf)=\sum_{i,j} \int_{\R^6\times\Sf}B_{ij}(|v-v_*|,\cos\theta) \big|{\epsM_j}^{\prime *}-\mu_j^{\prime *}\big|\big|f_i^{\prime}\big|\big|f_i\big| \mu_i^{-1}\dd\sigma\dd v_*\dd v
\]
\[
\hspace{1cm}\leq \varepsilon C^{(2)}\max_{1\leq i\leq N} \Big\{c_{i,\infty}^{1-\delta}|u_i|\mathcal{R}_{\varepsilon, i}\Big\}\sum_{i,j}\int_{\R^6\times\Sf}B_{ij}(|v-v_*|,\cos\theta) {\mu_j^{\prime *}}^{\delta}\big|f_i^{\prime}\big|\big|f_i \big|\mu_i^{-1}\dd\sigma\dd v_*\dd v,
\]
for any $\delta\in (0,1)$. Next, we write the previous integral in its kernel form using Lemma \ref{Lemma2}. That allows to get, for any $1\leq i,j\leq N$,

\[
\hspace{-7.3cm}\int_{\R^6\times\Sf}B_{ij}(|v-v_*|,\cos\theta) {\mu_j^{\prime *}}^{\delta}\big|f_i^{\prime}\big|\big|f_i\big| \mu_i^{-1}\ \dd\sigma\dd v_*\dd v
\]
\[
\hspace{1.3cm}=C_{ji} \int_{\R^6}\left(\frac{1}{|v-v_*|}\int_{E^{ij}_{v v_*}}\frac{B_{ij}\left(v-V(v_*,u),\frac{u-v_*}{|u-v_*|}\right)}{|u-v_*|}{\mu_j}^{\delta}(u)\ \dd E(u)\right)\big|f_i^*\big|\big|f_i\big| \mu_i^{-1}\dd v_*\dd v
\]
\[
\hspace{1.6cm}=C_{ji} \int_{\R^6}\left(\frac{1}{|v-v_*|}\int_{E^{ij}_{v v_*}}\frac{B_{ij}\left(v-V(v_*,u),\frac{u-v_*}{|u-v_*|}\right)}{|u-v_*|}{\mu_j}^{\delta}(u)\ \dd E(u)\right)\sqrt{\frac{\mu_i^*}{\mu_i}}\big|h_i^*\big|\big|h_i\big|\dd v_*\dd v.
\]

\vspace{1mm}
In order to estimate this integral, we use a result presented in \cite[Lemma 5.1]{BriDau}, in which the authors analyze the particular structure of the kernel form of the operator $\mathbf{K}$ and they deduce explicit pointwise bounds on it. The strategy involves a careful treatement of the different exponential decay rates appearing in the Maxwellian weights. This feature is specific to the multi-species framework and is a direct consequence of the asymmetry in the cross-interactions between species of the mixture, where the different atomic masses lead to these intricate mixing effects. Their result is proved in the case of global equilibria $\mu_i$, but it also straightforwardly applies to any function of the form $\mu_i^{\delta}$, with $\delta >0$. We omit the proof, since it follows exactly the same steps as in their work. 

\begin{lemm}\label{Lemma3}
For any $\delta\in (0,1)$ and $1\leq i,j\leq N$, let the operator $I_{ji}^{\delta}$ be defined for $v,v_*\in\R^3$ by
\[
I_{ji}^{\delta}(v,v_*)=\frac{C_{ji}}{|v-v_*|}\int_{E^{ij}_{v v_*}}\frac{B_{ij}\left(v-V(v_*,u),\frac{u-v_*}{|u-v_*|}\right)}{|u-v_*|}{\mu_j}^{\delta}(u)\ \dd E(u).
\]
Then, there exist two explicit constants $m, C_{ji}^{\delta}>0$ such that
\[
I_{ji}^{\delta}(v,v_*)\leq C_{ji}^{\delta}|v-v_*|^{\gamma -2}e^{-\delta m|v-v_*|^2-\delta m\frac{\left||v|^2-|v_*|^2\right|^2}{|v-v_*|^2}}\left(\frac{\mu_i(v)}{\mu_i(v_*)}\right)^{\delta/2},\quad \forall\ v,v_*\in\R^3.
\]
The constant $m$ only depends on the masses $m_i,m_j$, while the constant $C_{ji}^{\delta}$ only depends on $\delta$, the masses $m_i,m_j$ and the collisional kernel $B_{ij}$.
\end{lemm}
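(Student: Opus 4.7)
The plan is to follow the classical Carleman approach adapted to the multi-species framework of \cite{BriDau}. First I parametrize the hyperplane $E^{ij}_{v v_*}$ as $u = V_E(v,v_*) + w$ with $w\in(v-v_*)^\perp$, so that $\dd E(u) = \dd w$ is just the Lebesgue measure on a two-dimensional subspace. Using (H2)--(H4) I upper-bound $B_{ij}$ by a constant multiple of $|v-V(v_*,u)|^\gamma$. Letting $e = (v-v_*)/|v-v_*|$, two short computations yield the geometric identities
\[
V_E(v,v_*) - v_* = \frac{m_i+m_j}{2m_j}(v-v_*),\qquad v - V(v_*,u) = \frac{m_i+m_j}{2m_j}(v-v_*) - w,
\]
from which $|u-v_*|^{-1} \leq \frac{2m_j}{m_i+m_j}|v-v_*|^{-1}$ and $|v - V(v_*,u)|^\gamma \leq C\bigl(|v-v_*|^\gamma + |w|^\gamma\bigr)$ for $\gamma\in[0,1]$.

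The core of the proof is the Gaussian integration in $w$. Writing $V_E = \alpha\, e + V_E^\perp$ with $\alpha = \langle V_E, e\rangle$, the orthogonality $w\perp e$ gives $|u|^2 = \alpha^2 + |V_E^\perp + w|^2$. Using $|v|^2 - |v_*|^2 = \langle v-v_*, v+v_*\rangle$ I obtain the explicit identity
\[
\alpha = \frac{1}{2}\frac{|v|^2-|v_*|^2}{|v-v_*|} + \frac{m_i}{2m_j}|v-v_*|,
\]
so that expanding $\alpha^2$ splits $e^{-\delta m_j \alpha^2/2}$ into exactly three factors:
\[
e^{-\delta m_j \alpha^2/2} = \exp\!\Bigl(-\tfrac{\delta m_j}{8}\tfrac{(|v|^2-|v_*|^2)^2}{|v-v_*|^2}\Bigr)\cdot\exp\!\Bigl(-\tfrac{\delta m_i}{4}(|v|^2-|v_*|^2)\Bigr)\cdot\exp\!\Bigl(-\tfrac{\delta m_i^2}{8 m_j}|v-v_*|^2\Bigr).
\]
The middle factor is exactly $(\mu_i(v)/\mu_i(v_*))^{\delta/2}$, while the first and third factors reproduce the required pointwise decays upon choosing, say, $m := \frac{1}{16}\min_{i,j}\min\{m_j,\, m_i^2/m_j\}$ (keeping half of the Gaussian decay in reserve). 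The $w$-dependent part $e^{-\delta m_j |V_E^\perp + w|^2/2}$ integrates over $\R^2$ to $2\pi/(\delta m_j)$, and the polynomial prefactor $|v-v_*|^\gamma + |w|^\gamma$ only produces a Gaussian moment whose polynomial growth in $|V_E^\perp|\leq C(|v|+|v_*|)$ is absorbed by the reserved fraction of decay. The leading factor $|v-v_*|^{\gamma-2}$ then comes from combining $1/|v-v_*|$ in the Carleman representation, the bound on $|u-v_*|^{-1}$, and $|v-V(v_*,u)|^\gamma$.

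The main technical obstacle is the bookkeeping ensuring that the cross exponent produced by expanding $\alpha^2$ is exactly $-\delta m_i(|v|^2-|v_*|^2)/4$, so that it reconstructs the ratio $(\mu_i(v)/\mu_i(v_*))^{\delta/2}$ prescribed by the statement. This hinges crucially on the precise coefficient $(m_i+m_j)/(2m_j)$ in the definition of $V_E$; any other value would leave an uncontrolled cross term involving both $|v-v_*|$ and $|v|^2-|v_*|^2$. Since the parameter $\delta$ enters only as a multiplicative constant in the Gaussian exponent, the extension from $\delta=1$ (treated in \cite{BriDau}) to arbitrary $\delta\in(0,1)$ requires no substantive modification, only a relabelling of the constants, which is precisely why the authors choose to omit the detailed proof.
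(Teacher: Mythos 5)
Your overall route (parametrize the Carleman hyperplane, bound $B_{ij}$ via (H2)--(H4), complete the square in the Gaussian) is the same one the paper defers to in \cite[Lemma 5.1]{BriDau}, and the algebraic core of your sketch is correct: $u-v_*=\tfrac{m_i+m_j}{2m_j}(v-v_*)+w$, $v-V(v_*,u)=\tfrac{m_i+m_j}{2m_j}(v-v_*)-w$, the value $\alpha=\tfrac12\tfrac{|v|^2-|v_*|^2}{|v-v_*|}+\tfrac{m_i}{2m_j}|v-v_*|$, and the three-factor expansion of $e^{-\delta m_j\alpha^2/2}$, whose cross term is indeed exactly $(\mu_i(v)/\mu_i(v_*))^{\delta/2}$ after the cancellation of $m_j$. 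However, there is a genuine gap in your treatment of the kinetic factor when $\gamma>0$. Splitting $|v-V(v_*,u)|^\gamma\le C(|v-v_*|^\gamma+|w|^\gamma)$ and bounding $|u-v_*|^{-1}\lesssim|v-v_*|^{-1}$ leaves you with the moment $\int_{\R^2}|w|^\gamma e^{-\delta m_j|V_E^\perp+w|^2/2}\,\dd w\lesssim 1+|V_E^\perp|^\gamma$, where $V_E^\perp=\tfrac12(v+v_*)^\perp$ is the component of $V_E$ orthogonal to $v-v_*$. You claim this growth is absorbed by the reserved Gaussian decay, but the exponential you have retained decays only in $|v-v_*|$ and in $\big||v|^2-|v_*|^2\big|/|v-v_*|$, i.e.\ in the component of $v+v_*$ \emph{parallel} to $v-v_*$; it gives no control on the orthogonal component. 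Taking for instance $v_*=Re_2$ and $v=Re_2+\epsilon e_1$ with $R\to\infty$ and $\epsilon$ fixed, both retained quantities stay bounded while $|V_E^\perp|\sim R\to\infty$, so the absorption step fails (this is also why the analogous bound in Lemma \ref{Lemma5} carries an explicit extra factor $e^{-\delta\frac{m_i+m_j}{16}|V^\perp|^2}$, which is absent from the statement you are proving). In addition, for the $|w|^\gamma$ term your bookkeeping yields the prefactor $|v-v_*|^{-2}$, not the stated $|v-v_*|^{\gamma-2}$.

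The repair is simpler than the splitting. Since $w\perp(v-v_*)$, your two geometric identities give the exact equality $|v-V(v_*,u)|=|u-v_*|=\bigl(\bigl(\tfrac{m_i+m_j}{2m_j}\bigr)^2|v-v_*|^2+|w|^2\bigr)^{1/2}$, hence by (H3)--(H4), $B_{ij}\bigl(v-V(v_*,u),\cdot\bigr)/|u-v_*|\le C\,|u-v_*|^{\gamma-1}\le C'\,|v-v_*|^{\gamma-1}$ for $\gamma\in[0,1]$, because $|u-v_*|\ge\tfrac{m_i+m_j}{2m_j}|v-v_*|$ and $\gamma-1\le0$. Combined with the outer $|v-v_*|^{-1}$ this produces the prefactor $|v-v_*|^{\gamma-2}$ directly, and the remaining $w$-integral is the pure Gaussian $\int_{\R^2}e^{-\delta m_j|V_E^\perp+w|^2/2}\,\dd w=2\pi/(\delta m_j)$, so no moment in $|w|$ and no growth in $|V_E^\perp|$ ever appears. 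With that replacement, your Gaussian bookkeeping closes the proof and the extension to general $\delta\in(0,1)$ is indeed only a relabelling of constants, as the paper asserts.
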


 We prove here that this estimate allows us to recover an upper bound for $I_{ij}^{\delta}$ in a weighted $L_v^1$-norm. We follow closely the proof presented in \cite[Lemma 7]{Guo}.

\begin{lemm}\label{Lemma4}
There exists $\bar{\delta}\in (0,1)$ such that, for each $\delta\in (\bar{\delta},1)$, there exists a constant $C_{ji}(\delta)>0$ such that
\[
\hspace{-1cm}\int_{\R^3} I_{ji}^{\delta}(v,v_*)\sqrt{\frac{\mu_i(v_*)}{\mu_i(v)}}\ \dd v_*\leq C_{ji}(\delta),\hspace{0.5cm}\forall\ v\in\R^3,
\]

\[
\hspace{-0.7cm}\int_{\R^3} I_{ji}^{\delta}(v,v_*)\sqrt{\frac{\mu_i(v_*)}{\mu_i(v)}}\ \dd v\leq C_{ji}(\delta),\quad\forall\ v_*\in\R^3.
\]
The constant $C_{ji}(\delta)$ is explicit and only depends on the parameter $\delta\in (\bar{\delta},1)$, the masses $m_i,m_j$ and the collisional kernel $B_{ij}$.
\end{lemm}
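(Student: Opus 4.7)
The plan is to follow the mono-species argument of \cite[Lemma~7]{Guo}, carefully adapted to absorb the extra multi-species weight $(\mu_i(v)/\mu_i(v_*))^{\delta/2}$ coming from Lemma \ref{Lemma3}. I would combine the Maxwellian ratios, pass to a well-chosen set of spherical coordinates, reduce the three-dimensional integral to a one-dimensional Gaussian computation, and track the single smallness condition on $1-\delta$ that keeps the exponent integrable.

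First I would use the identity
$$\sqrt{\frac{\mu_i(v_*)}{\mu_i(v)}}\left(\frac{\mu_i(v)}{\mu_i(v_*)}\right)^{\delta/2} = \exp\!\left\{\frac{(1-\delta)m_i}{4}\bigl(|v|^2-|v_*|^2\bigr)\right\},$$
so that Lemma \ref{Lemma3} bounds both integrands in the statement pointwise by
$$C_{ji}^\delta\,|v-v_*|^{\gamma-2}\exp\!\left\{-\delta m|v-v_*|^2 -\delta m\frac{(|v|^2-|v_*|^2)^2}{|v-v_*|^2} + \frac{(1-\delta)m_i}{4}\bigl(|v|^2-|v_*|^2\bigr)\right\}.$$
For the first estimate I would then fix $v\neq 0$, set $\eta=v-v_*$ and use spherical coordinates $(r,\psi,\phi)$ for $\eta$ with polar axis $v/|v|$, so that $|v|^2-|v_*|^2 = 2|v|r\cos\psi-r^2$. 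The substitution $s=2|v|\cos\psi - r$ then collapses the awkward terms, since one checks that $(|v|^2-|v_*|^2)^2/|v-v_*|^2=s^2$, $|v|^2-|v_*|^2 = rs$ and $\sin\psi\,\dd\psi = \dd s/(2|v|)$, while the factor $r^2$ from the spherical Jacobian cancels the singular $r^{-2}$ in $|v-v_*|^{\gamma-2}$. After integrating $\phi$ over $[0,2\pi]$, the full integral reduces to
$$\frac{\pi C_{ji}^\delta}{|v|}\int_0^\infty r^\gamma \int_{-2|v|-r}^{2|v|-r}\exp\!\left\{-\delta m r^2 - \delta m s^2 + \frac{(1-\delta)m_i}{4}\,rs\right\} \dd s\,\dd r.$$

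Next I would complete the square in $s$, which rewrites the $(r,s)$-exponent as
$$-\alpha(\delta)\,r^2 - \delta m\left(s - \frac{(1-\delta)m_i\,r}{8\delta m}\right)^{2},\qquad \alpha(\delta):= \delta m - \frac{(1-\delta)^2 m_i^2}{64\,\delta m}.$$
The inner $s$-integral is then bounded by $\min\bigl(\sqrt{\pi/(\delta m)},\,4|v|\bigr)$, the first estimate being the full Gaussian integral over $\R$ and the second being the length of the integration interval (since the integrand is at most $1$). Dividing by $|v|$ produces a prefactor uniformly bounded by $4$ for every $v\neq 0$, and the remaining radial integral $\int_0^\infty r^\gamma e^{-\alpha(\delta)r^2}\dd r$ is finite as soon as $\alpha(\delta)>0$, that is, as soon as $8\delta m > (1-\delta)m_i$. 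Choosing $\bar\delta\in(0,1)$ large enough that this inequality holds for every pair $(i,j)$ then produces the claimed uniform constant $C_{ji}(\delta)$; the boundary case $v=0$ is handled by direct radial integration, since the integrand is then isotropic. The second estimate follows by the symmetric computation, fixing $v_*\neq 0$ and aligning the polar axis with $v_*/|v_*|$; the substitution $s=2|v_*|\cos\psi + r$ plays the role of $2|v|\cos\psi - r$, and the same one-dimensional Gaussian bound applies.

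The main obstacle will be the linear correction $(1-\delta)m_i(|v|^2-|v_*|^2)/4$, which has no definite sign and cannot be discarded. Square completion in $s$ absorbs it into the quadratic decay in $r$, but only at the price of shrinking the effective rate from $\delta m$ to $\alpha(\delta)$; this is precisely what forces $\delta$ to be chosen close to $1$ and fixes the threshold $\bar\delta$. Since every step of the reduction is an explicit Gaussian computation, all constants can be tracked explicitly in terms of $\delta$, the masses $m_i, m_j$ and the collision kernel $B_{ij}$, which matches the conclusion of the lemma.
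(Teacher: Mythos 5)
Your proof is correct and reaches the same threshold for $\bar\delta$ as the paper, but it takes a genuinely different route at the decisive step. Both arguments start from Lemma~\ref{Lemma3}, combine the two Maxwellian weights into the factor $\exp\big\{\tfrac{(1-\delta)m_i}{4}(|v|^2-|v_*|^2)\big\}$, and set $\eta=v-v_*$. From there the paper treats the resulting exponent as a quadratic form in the two scalar quantities $|\eta|$ and $\langle\eta,v\rangle/|\eta|$, computes its discriminant $\Delta(\delta)$, and for $\delta$ close to $1$ obtains negative definiteness, hence the coarse pointwise bound $-C(\delta)|\eta|^2$ on the exponent, after which $\int_{\R^3}|\eta|^{\gamma-2}e^{-C(\delta)|\eta|^2}\,\dd\eta<\infty$ closes the argument. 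You instead pass to spherical coordinates about $v/|v|$, introduce $s=2|v|\cos\psi-r$ to render the angular dependence exactly Gaussian, complete the square in $s$, and bound the inner integral by $\min\big(\sqrt{\pi/(\delta m)},\,4|v|\big)$, which, once divided by the Jacobian factor $|v|$, is uniformly bounded in $v$ (with $v=0$ handled directly). Both roads lead to the same constraint $8\delta m>(1-\delta)m_i$; your presentation realizes it as the positivity of $\alpha(\delta)=\delta m-(1-\delta)^2m_i^2/(64\delta m)$, making the threshold transparent as a Gaussian-integrability condition, while the paper extracts it from the sign of a discriminant. Your approach is more computational and keeps the uniformity in $v$ explicit, whereas the paper's is more algebraic and discards the $\langle\eta,v\rangle^2/|\eta|^2$ decay; both are fully constructive and track all constants, as the lemma requires.
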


\begin{proof}
We shall only prove the first inequality, since the second one can be showed in a similar way. Using the previous lemma, the term inside the integral can be bounded for all $v,v_*\in\R^3$ by

\[
I_{ji}^{\delta}(v,v_*)\sqrt{\frac{\mu_i(v_*)}{\mu_i(v)}}\leq C_{ji}^{\delta}|v-v_*|^{\gamma -2}e^{-\delta m|v-v_*|^2-\delta m\frac{\left||v|^2-|v_*|^2\right|^2}{|v-v_*|^2}}\left(\frac{\mu_i(v_*)}{\mu_i(v)}\right)^{\frac{1-\delta}{2}}.
\]
After renaming $\eta=v-v_*$ and $v_*=v-\eta$ in the integral (for the second inequality we should set $\eta=v_*-v$ and $v=v_*-\eta$), direct algebraic manipulations on the total exponent in the last inequality yield

\[
\hspace{-1.1cm}-\delta m|\eta|^2-\delta m\frac{\left||\eta|^2-2\langle \eta,v\rangle\right|^2}{|\eta|^2}-(1-\delta)\frac{m_i}{4}\left(|v-\eta|^2-|v|^2\right)
\]
\[
\hspace{-0.8cm}=-2\delta m|\eta|^2-4\delta m\frac{\langle \eta,v\rangle^2}{|\eta|^2}+4\delta m\langle \eta,v\rangle-(1-\delta)\frac{m_i}{4}\left(|\eta|^2-2\langle \eta,v\rangle\right)
\]
\[
=\left(-2\delta m-(1-\delta)\frac{m_i}{4}\right)|\eta|^2+\left(4\delta m+(1-\delta)\frac{m_i}{2}\right)\langle \eta,v\rangle -4\delta m\frac{\langle \eta,v\rangle^2}{|\eta|^2}.
\]
The discriminant of the above quadratic form in $|\eta|$ and $ \frac{\langle \eta,v\rangle}{|\eta|}$ can be computed and gives

\[
\Delta({\delta}) = \left(4\delta m+(1-\delta)\frac{m_i}{2}\right)^2-8\delta m\left(4\delta m+(1-\delta)\frac{m_i}{2}\right)
\]
\[
\hspace{-0.6cm}=\frac{1}{2}\left(4\delta m+(1-\delta)\frac{m_i}{2}\right)\big(1-\delta(8m+1)\big).
\]
If we now choose $\bar{\delta}=\frac{1}{8m+1}$, we clearly have that for all $\delta\in(\bar{\delta},1)$, $\Delta(\delta)<0$ which implies that the quadratic form is negative definite. Consequently, for all $\delta\in(\bar{\delta},1)$, there exists a constant $C(\delta)>0$ such that 

\[
\hspace{-0.5cm}-\delta m|\eta|^2-\delta m\frac{\left||\eta|^2-2\langle \eta,v\rangle\right|^2}{|\eta|^2}-(1-\delta)\frac{m_i}{4}\left(|v-\eta|^2-|v|^2\right)\leq -C(\delta) \left(|\eta|^2+\frac{\langle\eta, v\rangle^2}{|\eta|^2}\right)
\]
\[
\hspace{7.4cm}\leq -C(\delta) |\eta|^2.
\]
Using this last inequality, for each $\delta\in(\bar{\delta},1)$ we deduce that

\[
\hspace{-1cm}\int_{\R^3} I_{ji}^{\delta}(v,v_*)\sqrt{\frac{\mu_i(v_*)}{\mu_i(v)}}\dd v_* \leq C_{ji}(\delta)\int_{\R^3}|\eta|^{\gamma -2}e^{-\delta m|\eta|^2-\delta m\frac{\left||\eta|^2-2\langle \eta,v\rangle\right|^2}{|\eta|^2}-(1-\delta)\frac{m_i}{4}\left(|v-\eta|^2-|v|^2\right)}\dd \eta
\]
\[
\hspace{-2.2cm}\leq C_{ji}(\delta)\int_{\R^3}|\eta|^{\gamma -2}e^{-C(\delta)|\eta|^2}\dd \eta
\]

\[
\hspace{-3.7cm} \leq C_{ji}(\delta),\quad\forall\ v\in\R^3, 
\]
since the integral is finite and in particular bounded by a constant only depending on $\delta$. This concludes the proof.

\iffalse  %%%%%%%%%%%%%%%%%%%%%%%%%%%%%%%%%%%%%%%%%
\[
\hspace{4.5cm}=C_{ji}(\delta)\left(\int_{|\eta|>|v|}|\eta|^{\gamma -2}e^{-C(\delta)|\eta|^2}\dd \eta+\int_{|\eta|\leq|v|}|\eta|^{\gamma -2}e^{-C(\delta)|\eta|^2}\dd \eta\right).
\]
The first integral is clearly bounded by a constant only depending on $\delta$, while the second integral is treated by applying a standard spherical change of variable. We then get

\[
\int_{\R^3} I_{ji}^{\delta}(v,v_*)\sqrt{\frac{\mu_i(v_*)}{\mu_i(v)}}\dd v_* \leq C_{ji}(\delta)\left(1+\int_0^{|v|}\int_0^{2\pi}\int_0^{\pi}r^{\gamma}e^{-C(\delta)r^2}\sin\psi\dd\psi\dd\theta\dd r\right)
\]
\[
\hspace{0.8cm}\leq C_{ji}(\delta)\big(1+|v|^{\gamma}\big), \quad\forall\ v\in\R^3,
\]
which concludes the proof.
\fi      %%%%%%%%%%%%%%%%%%%%%%%%%%%%%%%%%%%%%%%%%%
\end{proof}

 Thanks to these results, we can now bound the term $\mathcal{T}_2^{\varepsilon}$. For any $\delta\in(\bar{\delta}_1,1)$ with $\bar{\delta}_1=\frac{1}{8m+1}$, applying again Young's inequality, Fubini's theorem, and using the straightforward bound $C_{ji}(\delta)\leq C_{ij}(\delta)\big(1+|v|^{\gamma}\big)$ for any $v\in\R^3$, we have

\[
\hspace{-4cm}\mathcal{T}_2^{\varepsilon}(\boldf)\leq \varepsilon C^{(2)} \max_{1\leq i\leq N} \Big\{c_{i,\infty}^{1-\delta}|u_i|\mathcal{R}_{\varepsilon, i}\Big\}\sum_{i,j}\int_{\R^6}I_{ji}^{\delta}(v,v_*)\sqrt{\frac{\mu_i^*}{\mu_i}}\big|h_i^*\big||h_i| \dd v_*\dd v
\]
\[
\hspace{0.5cm}\leq \varepsilon C^{(2)} \max_{1\leq i\leq N} \Big\{c_{i,\infty}^{1-\delta}|u_i|\mathcal{R}_{\varepsilon, i}\Big\}\sum_{i,j}\left(\int_{\R^6}I_{ji}^{\delta}(v,v_*)\sqrt{\frac{\mu_i^*}{\mu_i}}{h_i^*}^2 \dd v\dd v_* + \int_{\R^6}I_{ji}^{\delta}(v,v_*)\sqrt{\frac{\mu_i^*}{\mu_i}}h_i^2 \dd v_*\dd v\right)
\]
\[
\hspace{-0.7cm}\leq \varepsilon C^{(2)} \max_{1\leq i\leq N} \Big\{c_{i,\infty}^{1-\delta}|u_i|\mathcal{R}_{\varepsilon, i}\Big\}\sum_{i,j} C_{ji}(\delta)\left(\int_{\R^3}\big(1+|v_*|^{\gamma}\big){h_i^*}^2 \dd v_* + \int_{\R^3}\big(1+|v|^{\gamma}\big)h_i^2 \dd v\right)
\]

\begin{equation}\label{EstimationK2}
\hspace{-7cm}\leq\varepsilon C^{(2)} \max_{1\leq i\leq N} \Big\{c_{i,\infty}^{1-\delta}|u_i|\mathcal{R}_{\varepsilon, i}\Big\} \big\|\boldf\big\|_{\spaceLvM}^2.
\end{equation}

%%%%%%%%%%%%%%%%%%%%%%%%%%%%%%%%%%%%%%%%%%%%%%%%%%%%%%%%%%
\vspace{2mm}
\noindent \textbf{Step 6 -- Estimation of \hspace{0.5mm}$\bm{\mathcal{T}_1^{\varepsilon}}$.} The estimate for the last term $\mathcal{T}_1^{\varepsilon}$ is similar to the one for $\mathcal{T}_2^{\varepsilon}$, but more tricky. In fact, we shall deal with different masses $m_i,m_j$, which lead to an asymmetry in the Maxwellian weights inside the integral, namely the factor $\sqrt{\frac{\mu_j(v_*)}{\mu_i(v)}}$ appearing in this case. As a consequence, a careful treatment of these weights is required to compare the different exponential decay rates. 

\vspace{1mm}
 We first apply Lemma \ref{Lemma1} to obtain, for all $\delta\in(0,1)$
\[
\hspace{-2.1cm}\mathcal{T}_1^{\varepsilon}(\boldf)=\sum_{i,j} \int_{\R^6\times\Sf}B_{ij}(|v-v_*|,\cos\theta)\big|{\epsM_i}^{\prime}-\mu_i^{\prime}\big|\big|f_j^{\prime *}\big|\big|f_i\big| \mu_i^{-1}\dd\sigma\dd v_*\dd v
\]
\[
\hspace{1.1cm}\leq\varepsilon C^{(1)} \max_{1\leq i\leq N} \Big\{c_{i,\infty}^{1-\delta}|u_i|\mathcal{R}_{\varepsilon, i}\Big\}\sum_{i,j}\int_{\R^6\times\Sf}B_{ij}(|v-v_*|,\cos\theta){\mu_i^{\prime}}^{\delta}\big|f_j^{\prime *}\big| \big|f_i \big|\mu_i^{-1}\dd \sigma\dd v_*\dd v,
\]
with $\delta$ actually ranging in a subinterval of $(0,1)$ as before, to be appropriately chosen later. Thanks to Lemma \ref{Lemma2}, we rewrite the integral under its kernel form as 

\[
\hspace{-7cm}\int_{\R^6\times\Sf}B_{ij}(|v-v_*|,\cos\theta){\mu_i^{\prime}}^{\delta}\big|f_j^{\prime *}\big| \big|f_i\big| \mu_i^{-1}\dd \sigma\dd v_*\dd v
\] 
\[
\hspace{1.8cm}=C_{ij} \int_{\R^6}\left(\frac{1}{|v-v_*|}\int_{\tilde{E}^{ij}_{v v_*}}\frac{B_{ij}\left(v-V(u,v_*),\frac{v_*-u}{|u-v_*|}\right)}{|u-v_*|}{\mu_i}^{\delta}(u)\dd \tilde{E}(u)\right)\big|f_j^*\big|\big|f_i\big| \mu_i^{-1}\dd v_*\dd v
\]
\[
\hspace{1.9cm}=C_{ij} \int_{\R^6}\left(\frac{1}{|v-v_*|}\int_{\tilde{E}^{ij}_{v v_*}}\frac{B_{ij}\left(v-V(u,v_*),\frac{v_*-u}{|u-v_*|}\right)}{|u-v_*|}{\mu_i}^{\delta}(u)\dd \tilde{E}(u)\right)\sqrt{\frac{\mu_j^*}{\mu_i}}\big|h_j^*\big|\big|h_i\big|\dd v_*\dd v.
\]
In the case $m_i=m_j$, the computations made for $\mathcal{T}_2^{\varepsilon}$ apply in the same way, with the kernel operator satisfying the properties proved in the two lemmata of Step 5. We therefore suppose, from now on, that $m_i\neq m_j$. In this case, the equivalent result of Lemma $\ref{Lemma3}$ is the following. 

\begin{lemm}\label{Lemma5}
For any $\delta\in (0,1)$ and $1\leq i,j\leq N$, let the operator $I_{ij}^{\delta}$ be defined for $v,v_*\in\R^3$ by
\[
I_{ij}^{\delta}(v,v_*)=\frac{C_{ij}}{|v-v_*|}\int_{\tilde{E}^{ij}_{v v_*}}\frac{B_{ij}\left(v-V(u,v_*),\frac{v_*-u}{|u-v_*|}\right)}{|u-v_*|}{\mu_i}^{\delta}(u)\ \dd \tilde{E}(u).
\]
Then, there exist two explicit constants $\lambda, C_{ij}^{\delta}>0$, only depending on $\delta$, the masses $m_i,m_j$ and the collisional kernel $B_{ij}$, such that
\[
I_{ij}^{\delta}(v,v_*)\leq C_{ij}^{\delta}|v-v_*|^{\gamma }e^{-\delta \lambda|v-v_*|^2-\delta \lambda\frac{\left||v|^2-|v_*|^2\right|^2}{|v-v_*|^2}-\delta\frac{m_i+m_j}{16}\big|V^{\perp}\big|^2}\left(\frac{\mu_i(v)}{\mu_j(v_*)}\right)^{\delta/2},\quad \forall\ v,v_*\in\R^3,
\]
where we have defined $v+v_*=V^{\parallel}+V^{\perp}$, with $V^{\parallel}$ being the projection onto $\mathrm{Span}(v-v_*)$ and $V^{\perp}$ orthogonal to $\mathrm{Span}(v-v_*)$.
\end{lemm}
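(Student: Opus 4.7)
The plan is to follow the strategy developed in \cite{BriDau} for the proof of their Lemma 5.1, carefully adapting it to two new features: the presence of the exponent $\delta\in(0,1)$ on the Maxwellian $\mu_i$ in the integrand, and the fact that the surface $\tilde{E}^{ij}_{vv_*}$ is a sphere (since $m_i\neq m_j$) rather than a hyperplane. First, I would parametrize $u\in\tilde{E}^{ij}_{vv_*}$ using the explicit description of the sphere given right before Lemma \ref{Lemma2}: write $u=O_{vv_*}+R_{vv_*}\omega$ with $\omega\in\Sf$, so that $\mathrm{d}\tilde{E}(u)=R_{vv_*}^2\,\mathrm{d}\omega$. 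Expanding $|u|^2$ in terms of $|O_{vv_*}|^2$, $R_{vv_*}^2$ and $\langle O_{vv_*},\omega\rangle$ produces a tractable exponent for $\mu_i^{\delta}(u)$.

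Next, I would carry out the algebraic reduction of the total exponent. The factor $R_{vv_*}^2=\frac{m_j^2}{(m_i-m_j)^2}|v-v_*|^2$ combines with the kinetic part of $B_{ij}$ (of order $|v-V(u,v_*)|^{\gamma}$) and with the $1/(|v-v_*||u-v_*|)$ factors to yield, after estimating $|u-v_*|$ from below by a constant multiple of $|v-v_*|$ on the sphere, the prefactor $|v-v_*|^{\gamma}$ of the claim. The delicate point is splitting $-\delta m_i |u|^2/2$ into (a) a leading part $-\frac{\delta}{2}\bigl(m_i|v|^2-m_j|v_*|^2\bigr)$, which reconstructs the Maxwellian ratio $\bigl(\mu_i(v)/\mu_j(v_*)\bigr)^{\delta/2}$ (after absorbing the normalizing constants into $C_{ij}^{\delta}$), and (b) a quadratic remainder in $(v-v_*)$, $(v+v_*)$, and $\omega$. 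I would then decompose $v+v_*=V^{\parallel}+V^{\perp}$ along and orthogonally to $\mathrm{Span}(v-v_*)$, writing $\omega=\alpha\,\tfrac{v-v_*}{|v-v_*|}+\omega^{\perp}$ with $\alpha\in[-1,1]$. The $V^{\parallel}$-contribution rearranges, via the identity $\langle v-v_*,v+v_*\rangle=|v|^2-|v_*|^2$, into the term $-\delta\lambda\,\bigl||v|^2-|v_*|^2\bigr|^2/|v-v_*|^2$, while the cross term involving $V^{\perp}$ and $\omega^{\perp}$ can be bounded using $2ab\leq a^2/2+2b^2$ to absorb the $\omega^{\perp}$ part into the Gaussian weight $e^{-\delta\lambda|v-v_*|^2}$ coming from $R_{vv_*}^2$; this leaves the residual term $-\delta\,\tfrac{m_i+m_j}{16}|V^{\perp}|^2$.

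Finally, the $\omega$-integration over $\Sf$ is handled by pulling out the $\omega$-independent exponential factors (the ones that will form the pointwise bound of the lemma) and using assumption (H4) on $b_{ij}$ to bound the angular piece of $B_{ij}\bigl(v-V(u,v_*),\tfrac{v_*-u}{|u-v_*|}\bigr)$ uniformly, the remaining integrand being integrable on $\Sf$; the resulting constant is absorbed into $C_{ij}^{\delta}$. The constants $\lambda$ and $C_{ij}^{\delta}$ depend explicitly on $\delta$, $m_i$, $m_j$ and the data of $B_{ij}$, but not on $v,v_*$, which is the claim.

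The main obstacle I expect is step two: the careful completing-of-the-square argument that simultaneously (i) extracts the \emph{asymmetric} Maxwellian ratio $(\mu_i(v)/\mu_j(v_*))^{\delta/2}$ (this asymmetry is the essential new difficulty compared with the case $\mathcal{T}_2^{\varepsilon}$ treated through Lemma \ref{Lemma3}), (ii) produces a strictly negative quadratic form on $\R^3\times\R^3$ so that all three decay terms $|v-v_*|^2$, $\bigl||v|^2-|v_*|^2\bigr|^2/|v-v_*|^2$, and $|V^{\perp}|^2$ appear with positive coefficients, and (iii) does so uniformly in $\delta$ on a subinterval of $(0,1)$. Once this algebraic identification is in place, the rest of the argument is bookkeeping and follows the template of \cite[Lemma 5.1]{BriDau}.
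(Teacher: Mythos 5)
The paper itself does not prove Lemma~\ref{Lemma5}: it states that ``the proof still follows exactly the one presented in~\cite[Lemma 5.1]{BriDau} (see also the one presented in~\cite{BouGrePavSal}), therefore it is omitted.'' Your proposal takes the same route --- adapt~\cite[Lemma 5.1]{BriDau} by replacing $\mu_i$ with $\mu_i^{\delta}$ and working on the sphere $\tilde{E}^{ij}_{v v_*}$ --- so at the level of approach you and the paper agree, and your sketch of the sphere parametrization $u = O_{v v_*} + R_{v v_*}\omega$, the decomposition $v+v_* = V^{\parallel} + V^{\perp}$, the lower bound $|u - v_*| \geq |v - v_*|$, and the absorption of the angular piece via (H4) are all the right ingredients.

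Two small points of precision. First, to reconstruct the ratio $\big(\mu_i(v)/\mu_j(v_*)\big)^{\delta/2}$ the leading piece to isolate from the exponent of $\mu_i^{\delta}(u)$ is $-\tfrac{\delta}{4}\big(m_i|v|^2 - m_j|v_*|^2\big)$, not $-\tfrac{\delta}{2}\big(m_i|v|^2 - m_j|v_*|^2\big)$ as you wrote; this is consistent with the $\tfrac{1-\delta}{4}\big(m_i|v|^2 - m_j|v_*|^2\big)$ exponent that appears in the proof of Lemma~\ref{Lemma6}. Second, you describe the $-\delta\frac{m_i+m_j}{16}|V^{\perp}|^2$ factor as a ``residual'' produced after absorbing a cross term; in the reference it emerges more directly from the geometry of the sphere (it is the Gaussian decay transversal to $\mathrm{Span}(v-v_*)$ already present in the $\delta = 1$ case of~\cite{BriDau}), so no extra Young-type manipulation is needed to manufacture it. You honestly flag the completing-of-the-square step as the main obstacle you have not fully carried out; given that the paper defers entirely to~\cite{BriDau} for this step, that deference is acceptable, but your sketch should not be read as a self-contained verification of the claimed constants $\lambda$ and $C_{ij}^{\delta}$.
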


 The proof still follows exactly the one presented in \cite[Lemma 5.1]{BriDau} (see also the one presented in \cite{BouGrePavSal}), therefore it is omitted. On the other hand, we will see that the proof of the next result is a bit more intricate than the one of Lemma \ref{Lemma4}.

\begin{lemm}\label{Lemma6}
There exists $\bar{\delta}\in (0,1)$ such that, for each $\delta\in (\bar{\delta},1)$, there exists a constant $C_{ij}(\delta)>0$ such that
\begin{align}
\hspace{-0.5cm}\int_{\R^3} I_{ij}^{\delta}(v,v_*)\sqrt{\frac{\mu_j(v_*)}{\mu_i(v)}}\dd v_*\leq C_{ij}(\delta),\quad\forall\ v\in\R^3,\label{Integral1}\\
\int_{\R^3} I_{ij}^{\delta}(v,v_*)\sqrt{\frac{\mu_j(v_*)}{\mu_i(v)}}\dd v\leq C_{ij}(\delta),\quad\forall\ v_*\in\R^3\label{Integral2}.
\end{align}
The constant $C_{ij}(\delta)$ is explicit and only depends on the parameter $\delta\in (\bar{\delta},1)$, the masses $m_i,m_j$ and the collisional kernel $B_{ij}$.
\end{lemm}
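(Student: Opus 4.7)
The plan is to deduce both bounds from the pointwise estimate of Lemma \ref{Lemma5} by reducing to a Gaussian integration in a shifted variable, following the strategy of Lemma \ref{Lemma4} but now carefully tracking the asymmetry coming from $m_i \neq m_j$. For (\ref{Integral1}), I would first multiply the pointwise bound of Lemma \ref{Lemma5} by $\sqrt{\mu_j(v_*)/\mu_i(v)}$; the Maxwellian factor $(\mu_i(v)/\mu_j(v_*))^{\delta/2}$ supplied by the lemma combines with this weight to produce $(\mu_j(v_*)/\mu_i(v))^{(1-\delta)/2}$, a quantity that becomes mild as $\delta$ approaches $1$. Then I substitute $\eta = v - v_*$ and expand all quantities appearing in the exponential: $|v - v_*|^2 = |\eta|^2$, $|v|^2 - |v_*|^2 = 2\langle v,\eta\rangle - |\eta|^2$, $v + v_* = 2v - \eta$, and correspondingly $|V^{\perp}|^2 = |2v-\eta|^2 - (2\langle v,\eta\rangle - |\eta|^2)^2/|\eta|^2$.

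After this expansion, the total exponent splits as the sum of a genuine quadratic form in $(v,\eta)$ and a rational term of the shape $(-\delta\lambda + \delta(m_i+m_j)/16)\,(2\langle v,\eta\rangle - |\eta|^2)^2/|\eta|^2$. The latter can be made nonpositive by replacing the constant $\lambda$ from Lemma \ref{Lemma5} with $\min(\lambda,(m_i+m_j)/16)$, which only alters the prefactor $C_{ij}^{\delta}$. Collecting the remaining contributions, the coefficient of $|v|^2$ is exactly $\tfrac{1}{4}[m_i(1 - 2\delta) - m_j]$, strictly negative as soon as $\delta > \max\{1/2,(m_i-m_j)/(2m_i)\}$; the coefficient of $|\eta|^2$ is of order $-\delta\lambda$, also strictly negative; and the surviving cross term is linear in $\langle v,\eta\rangle$ with bounded coefficient, which is absorbed by the two previous ones through Young's inequality. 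Choosing $\bar\delta\in(0,1)$ sufficiently close to $1$ in terms of $\lambda, m_i, m_j$, I obtain for every $\delta\in(\bar\delta,1)$ a constant $C(\delta)>0$ such that the total exponent is bounded from above by $-C(\delta)(|v|^2+|\eta|^2)$; in particular by $-C(\delta)|\eta|^2$ uniformly in $v$. The polynomial prefactor $|v-v_*|^{\gamma}=|\eta|^{\gamma}$ is then absorbed by the Gaussian decay, and
\[
\int_{\R^3} |\eta|^{\gamma}\,e^{-C(\delta)|\eta|^2}\,\dd\eta <+\infty,
\]
which proves (\ref{Integral1}). For (\ref{Integral2}) the same scheme applies after the symmetric substitution $\eta = v_*-v$: the exponent of Lemma \ref{Lemma5} depends on $v-v_*$ and $v+v_*$ in a way that is symmetric in the two velocities, and the coefficient of $|v_*|^2$ produced by the Maxwellian ratio turns out to coincide algebraically with $\tfrac{1}{4}[m_i(1-2\delta)-m_j]$, so the same threshold $\bar\delta$ works.

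The main obstacle is precisely this mass asymmetry. In the proof of Lemma \ref{Lemma4}, the choice $m_i = m_j$ aligns the exponential decay rates of $\mu_i$ and $\mu_j$, so the exponent collapses to a quadratic form in $|\eta|$ and $\langle v,\eta\rangle/|\eta|$ with no genuine $|v|^2$ dependence. Here, the mismatch between the decay rates forces a term proportional to $|v|^2$ to appear in the exponent, and without extra control this term would break the uniformity in $v$. The additional $|V^{\perp}|^2$ contribution provided by Lemma \ref{Lemma5}---absent from its mass-equal analogue Lemma \ref{Lemma3}---is precisely what produces the decisive negative coefficient in front of $|v|^2$ and allows one to close the estimate for $\delta$ sufficiently close to $1$.
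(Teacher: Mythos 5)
Your overall route is essentially the paper's: start from the pointwise bound of Lemma \ref{Lemma5}, combine the weight into a factor $\left(\mu_j(v_*)/\mu_i(v)\right)^{(1-\delta)/2}$, pass to the variable $\eta=v-v_*$ (resp.\ $\eta=v_*-v$), show that for $\delta$ close to $1$ the total exponent is dominated by a negative definite quadratic expression, and conclude by a Gaussian integration in $\eta$. The only real difference is bookkeeping: the paper rewrites $m_i|v|^2-m_j|v_*|^2$ through the decomposition $v+v_*=V^{\parallel}+V^{\perp}$, discards the $|V^{\perp}|^2$ term under the condition $\delta>\frac{m_i-m_j}{2m_i}$, and checks negativity of a quadratic form in $|\eta|$ and $\langle\eta,v\rangle/|\eta|$, whereas you keep the $|V^{\perp}|^2$ term to generate decay in $|v|^2$ and discard the rational remainder, ending with a form in $(|v|,|\eta|)$. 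Your coefficient $\frac14\big[m_i(1-2\delta)-m_j\big]$ for $|v|^2$ and the limiting discriminant (equal to $-(m_i+m_j)\lambda<0$ at $\delta=1$) are correct, as is the symmetry claim for (\ref{Integral2}).

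There is, however, one step that fails as written: the neutralization of the positive rational term. After expanding $|V^{\perp}|^2=|2v-\eta|^2-\big(2\langle v,\eta\rangle-|\eta|^2\big)^2/|\eta|^2$, that term carries the coefficient $\delta\big(\tfrac{m_i+m_j}{16}-\lambda\big)$, and replacing $\lambda$ by $\min\big(\lambda,\tfrac{m_i+m_j}{16}\big)$ does nothing precisely in the problematic case $\lambda<\tfrac{m_i+m_j}{16}$ (the minimum is then $\lambda$ itself), so the coefficient stays positive; since $\lambda$ is just some explicit constant produced by Lemma \ref{Lemma5}, this case cannot be excluded (and, incidentally, such a replacement modifies the exponent, not the prefactor $C_{ij}^{\delta}$). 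The repair is the opposite substitution: weaken the negative term $-\delta\tfrac{m_i+m_j}{16}|V^{\perp}|^2$ to $-\delta\kappa|V^{\perp}|^2$ with $\kappa=\min\big(\lambda,\tfrac{m_i+m_j}{16}\big)$, which is a legitimate upper bound since $|V^{\perp}|^2\geq 0$; the rational contribution then has coefficient $\delta(\kappa-\lambda)\leq 0$ and can be dropped, while the coefficient of $|v|^2$ becomes $-4\delta\kappa+\tfrac{(1-\delta)(m_i-m_j)}{4}$ instead of $\frac14\big[m_i(1-2\delta)-m_j\big]$. This is still strictly negative for $\delta$ close to $1$, and the discriminant of the resulting form in $(|v|,|\eta|)$ tends to $-16\kappa\lambda<0$ as $\delta\to 1$, so your conclusion, and the symmetric argument for (\ref{Integral2}), go through with this modification.
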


\begin{proof}
We begin by studying the term inside the integral of inequality (\ref{Integral1}). Using Lemma \ref{Lemma5}, we easily deduce that for all $v,v_*\in\R^3$

\begin{equation}\label{TotalExponent}
I_{ij}^{\delta}(v,v_*)\sqrt{\frac{\mu_j(v_*)}{\mu_i(v)}}\leq C_{ij}^{\delta}|v-v_*|^{\gamma }e^{-\delta \lambda|v-v_*|^2-\delta \lambda\frac{\left||v|^2-|v_*|^2\right|^2}{|v-v_*|^2}-\delta\frac{m_i+m_j}{16}\big|V^{\perp}\big|^2}\left(\frac{\mu_j(v_*)}{\mu_i(v)}\right)^{\frac{1-\delta}{2}},
\end{equation}
recalling that we have defined $v+v_*=V^{\parallel}+V^{\perp}$, with $V^{\parallel}$ being the projection onto $\textrm{Span}(v-v_*)$ and $V^{\perp}$ orthogonal to $\textrm{Span}(v-v_*)$.

\vspace{1mm}
 We want to get rid of the asymmetry between the masses $m_i$ and $m_j$ in the Maxwellian weight $\left(\frac{\mu_j(v_*)}{\mu_i(v)}\right)^{\frac{1-\delta}{2}}$. Employing the decomposition $v+v_*=V^{\parallel}+V^{\perp}$, the following equality holds for any $a,b\in\R$:

\vspace{1mm}
\[
\hspace{-4cm}|av-b v_*|^2=\left|\frac{a-b}{2}(v+v_*)+\frac{a+b}{2}(v-v_*)\right|^2
\]
\[
\hspace{-0.9cm}=\frac{(a-b)^2}{4}\big|V^{\perp}\big|^2+\frac{(a-b)^2}{4}\frac{\left||v|^2-|v_*|^2\right|^2}{|v-v_*|^2}
\]
\[
\hspace{1.2cm}+\frac{(a-b)(a+b)}{2}\big(|v|^2-|v_*|^2\big)+\frac{(a+b)^2}{4}|v-v_*|^2,
\]
since by definition

\[
\big|V^{\parallel}\big|^2=\frac{\langle v+v_*,v-v_*\rangle^2}{|v-v_*|^2}=\frac{\left||v|^2-|v_*|^2\right|^2}{|v-v_*|^2}.
\]
\vspace{1mm}
 Then, taking $(a,b)=(1,0)$ and then $(a,b)=(0,1)$, we immediately get

\[
\hspace{-3cm}m_i|v|^2-m_j|v_*|^2=\frac{m_i-m_j}{4}\big|V^{\perp}\big|^2+\frac{m_i-m_j}{4}\frac{\left||v|^2-|v_*|^2\right|^2}{|v-v_*|^2}
\]
\[
\hspace{1cm}+\frac{m_i+m_j}{2}\big(|v|^2-|v_*|^2\big)+\frac{m_i-m_j}{4}|v-v_*|^2
\]
We can now apply this last equality in the study of the total exponent of (\ref{TotalExponent}). We first estimate

\[
\hspace{-1cm}-\delta \lambda|v-v_*|^2-\delta \lambda\frac{\left||v|^2-|v_*|^2\right|^2}{|v-v_*|^2}-\delta\frac{m_i+m_j}{16}\big|V^{\perp}\big|^2+\frac{1-\delta}{4}\big(m_i|v|^2-m_j|v_*|^2\big)
\]

\[
\hspace{-0.5cm}=\left(-\delta\lambda+(1-\delta)\frac{m_i-m_j}{16}\right)\left(|v-v_*|^2+\frac{\left||v|^2-|v_*|^2\right|^2}{|v-v_*|^2}\right)+(1-\delta)\frac{m_i+m_j}{8}\big(|v|^2-|v_*|^2\big)
\]
\[
\hspace{-9.4cm}-\frac{m_i}{8}\left(\delta-\frac{m_i-m_j}{2m_i}\right)\big|V^{\perp}\big|^2
\]
\begin{equation}\label{Condition1}
\hspace{0.3cm}\leq \left(-\delta\lambda+(1-\delta)\frac{m_i-m_j}{16}\right)\left(|v-v_*|^2+\frac{\left||v|^2-|v_*|^2\right|^2}{|v-v_*|^2}\right)+(1-\delta)\frac{m_i+m_j}{8}\big(|v|^2-|v_*|^2\big),
\end{equation}
whenever $\delta>\max\left\{0,\frac{m_i-m_j}{2 m_i}\right\}$.

\vspace{1mm}
 As previously done, we inject $\eta=v-v_*$ and $v_*=v-\eta$ in the right-hand side of (\ref{Condition1}) and we obtain that the total exponent writes

\[
\hspace{-0.5cm}\left(-\delta\lambda+(1-\delta)\frac{m_i-m_j}{16}\right)\left(|\eta|^2+\frac{\left||\eta|^2-2\langle \eta,v\rangle\right|^2}{|\eta|^2}\right)-(1-\delta)\frac{m_i+m_j}{8}\big(|v-\eta|^2-|v|^2\big)
\]
\begin{equation}\label{QuadraticForm}
=\bigg(-2\delta\lambda-(1-\delta)\frac{m_j}{4}\bigg)|\eta|^2+\bigg(4\delta\lambda+(1-\delta)\frac{m_j}{2}\bigg)\langle\eta,v\rangle +\left(-4\delta\lambda+(1-\delta)\frac{m_i-m_j}{4}\right)\frac{\langle\eta,v\rangle^2}{|\eta|^2}.
\end{equation}
Computing the discriminant of the above quadratic form, we get
\[
\Delta(\delta)=\bigg(4\delta\lambda+(1-\delta)\frac{m_j}{2}\bigg)^2+\bigg(4\delta\lambda+(1-\delta)\frac{m_j}{2}\bigg)\left(-8\delta\lambda+(1-\delta)\frac{m_i-m_j}{2}\right)
\]
\[
\hspace{-3.7cm}=\frac{m_i}{2}\bigg(4\delta\lambda+(1-\delta)\frac{m_j}{2}\bigg)\Bigg(1-\delta\bigg(8\frac{\lambda}{m_i}+1\bigg)\Bigg).
\]
Then we immediately check that if we choose $\bar{\delta}=\max\left\{0,\frac{m_i-m_j}{2m_i},\frac{m_i}{8\lambda+m_i}\right\}$, inequality (\ref{Condition1}) and $\Delta(\delta)<0$ are both satisfied for any $\delta\in(\bar{\delta},1)$. This ensures that, for any $\bar{\delta}<\delta<1$, the quadratic form (\ref{QuadraticForm}) remains negative for any $v\in\R^3$. Consequently, we get that

\[
\hspace{-2cm}-\delta \lambda|\eta|^2-\delta \lambda\frac{\left||\eta|^2-2\langle\eta,v\rangle\right|^2}{|\eta|^2}-\delta\frac{m_i+m_j}{16}\big|V^{\perp}\big|^2-\frac{1-\delta}{4}\big(m_j|v-\eta|^2-m_i|v|^2\big)
\]
\[
\hspace{4.8cm}\leq -C(\delta) \left(|\eta|^2+\frac{\langle\eta, v\rangle^2}{|\eta|^2}\right)\leq -C(\delta) |\eta|^2,
\]
for some explicit constant $C(\delta)>0$ only depending on $\delta$ and the masses $m_i,m_j$.

\vspace{1mm}
 Proceeding as in the proof of Lemma \ref{Lemma4}, for each $\delta\in(\bar{\delta},1)$ the integral in (\ref{Integral1}) can then be bounded as

\[
\hspace{-10.5cm}\int_{\R^3} I_{ij}^{\delta}(v,v_*)\sqrt{\frac{\mu_j(v_*)}{\mu_i(v)}}\dd v_* 
\]
\[
\hspace{2cm}\leq C_{ij}(\delta)\int_{\R^3}|\eta|^{\gamma}e^{-\delta \lambda|\eta|^2-\delta \lambda\frac{\left||\eta|^2-2\langle\eta,v\rangle\right|^2}{|\eta|^2}-\delta\frac{m_i+m_j}{16}\big|V^{\perp}\big|^2-\frac{1-\delta}{4}\big(m_j|v-\eta|^2-m_i|v|^2\big)}\dd \eta
\]
\[
\hspace{-5.2cm}\leq C_{ij}(\delta)\int_{\R^3}|\eta|^{\gamma -2}e^{-C(\delta)|\eta|^2}\dd \eta.
\]

\[
\hspace{-8.8cm}\leq C_{ij}(\delta).
\]

\vspace{1mm}
 Inequality (\ref{Integral2}) can be proved in the same way, by setting this time $\eta =v_*-v$ and using $v=v_*-\eta$. This ends the proof.
\end{proof}

 Young's inequality and Fubini's theorem allow to conclude that, taking $\delta\in (\bar{\delta}_2,1)$ with $ \bar{\delta}_2=\max_{1\leq i,j\leq N}\left\{0,\frac{m_i-m_j}{2m_i},\frac{m_i}{8\lambda+m_i}\right\}$, the term $\mathcal{T}_1^{\varepsilon}$ is bounded by

\vspace{2mm}
\[
\hspace{-4cm}\mathcal{T}_1^{\varepsilon}(\boldf)\leq \varepsilon C^{(1)} \max_{1\leq i\leq N} \Big\{c_{i,\infty}^{1-\delta}|u_i|\mathcal{R}_{\varepsilon, i}\Big\}\sum_{i,j}\int_{\R^6}I_{ij}^{\delta}(v,v_*)\sqrt{\frac{\mu_j^*}{\mu_i}}\big|h_j^*\big|\big|h_i\big| \dd v_*\dd v
\]
\[
\hspace{0.5cm}\leq \varepsilon C^{(1)} \max_{1\leq i\leq N} \Big\{c_{i,\infty}^{1-\delta}|u_i|\mathcal{R}_{\varepsilon, i}\Big\}\sum_{i,j}\left(\int_{\R^6}I_{ij}^{\delta}(v,v_*)\sqrt{\frac{\mu_j^*}{\mu_i}}{h_j^*}^2 \dd v\dd v_* + \int_{\R^6}I_{ij}^{\delta}(v,v_*)\sqrt{\frac{\mu_j^*}{\mu_i}}h_i^2 \dd v_*\dd v\right)
\]
\[
\hspace{-1cm}\leq \varepsilon C^{(1)} \max_{1\leq i\leq N} \Big\{c_{i,\infty}^{1-\delta}|u_i|\mathcal{R}_{\varepsilon, i}\Big\}\left(\sum_{j=1}^N \int_{\R^3}\big(1+|v_*|^{\gamma}\big){h_j^*}^2 \dd v_* +\sum_{i=1}^N  \int_{\R^3}\big(1+|v|^{\gamma}\big)h_i^2 \dd v\right)
\]

\vspace{2mm}
\begin{equation}\label{EstimationK1}
\hspace{-7cm}\leq\varepsilon C^{(1)} \max_{1\leq i\leq N} \Big\{c_{i,\infty}^{1-\delta}|u_i|\mathcal{R}_{\varepsilon, i}\Big\} \big\|\boldf\big\|_{\spaceLvM}^2.
\end{equation}

%%%%%%%%%%%%%%%%%%%%%%%%%%%%%%%%%%%%%%%%%%%%%%%%%%%%%%%
\vspace{2mm}
\noindent \textbf{Step 7 -- End of the proof.} Choosing $\bar{\delta}=\max\Big\{\frac{1}{2},\bar{\delta}_1,\bar{\delta}_2\Big\}$, estimates (\ref{EstimationK4})--(\ref{EstimationK2}) and (\ref{EstimationK1}) are simultaneously verified, and we can gather them together to obtain, for all $\boldf\in\spaceLMR$,

\[
\big\langle\mathbf{\epsL}(\boldf)-\boldL(\boldf),\boldf\big\rangle_{\spaceLM}\leq  C\varepsilon\max_{1\leq i\leq N} \Big\{c_{i,\infty}^{1-\delta}|u_i|\mathcal{R}_{\varepsilon, i}\Big\} \big\|\boldf\big\|_{\spaceLvM}^2,
\]
with $\displaystyle C=\max_{1\leq i\leq 4} C^{(i)}$.

\vspace{1mm}
 To conclude, we simply observe that we can write $\boldf =\boldf - \pi_{\boldL}(\boldf) + \pi_{\boldL}(\boldf)$ and then apply Young's inequality in this last estimate, to finally get

\[
\hspace{-1cm}\big\langle \mathbf{\epsL}(\boldf), \boldf\big\rangle_{\spaceLM}\leq -\left(\lambda_{\boldL}-C\varepsilon\max_{1\leq i\leq N} \Big\{c_{i,\infty}^{1-\delta}|u_i|\mathcal{R}_{\varepsilon, i}\Big\}\right)\|\boldf-\pi_{\boldL}(\boldf)\|_{\spaceLvM}^2
\] 

\[
\hspace{0.5cm}+C\varepsilon\max_{1\leq i\leq N} \Big\{c_{i,\infty}^{1-\delta}|u_i|\mathcal{R}_{\varepsilon, i}\Big\}\|\pi_{\boldL}(\boldf)\|_{\spaceLvM}^2,
\]
for all $\boldf\in\spaceLMR$.

\iffalse
To conclude, we choose $\bar{\varepsilon}=\frac{\lambda_{\boldL}}{C\max_i |u_i|}$ and distinguish two cases.

\vspace{2mm}
 1) For $\boldf\in\textrm{Ker}(\boldL)^{\perp}$, we can clearly write $\boldf=\boldf -\pi_{\boldL}(\boldf)$ and deduce that 

\[
\big\langle \mathbf{\epsL}(\boldf), \boldf\big\rangle_{\spaceLM}\leq -\left(\lambda_{\boldL}-C\varepsilon\max_{1\leq i\leq N}\big|u_i(t,x)\big|\right)\|\boldf-\pi_{\boldL}(\boldf)\|_{\spaceLvM}^2.
\]

\vspace{2mm}
 2) For $\boldf\in\textrm{Ker}(\boldL)$, we have that $\boldf$ has the form (\ref{ProjectionL}). The inequality is then trivially true, thanks to the conservation properties of the Boltzmann operator. In fact, we have

\[
\hspace{-7.5cm}\big\langle \mathbf{\epsL}(\boldf), \boldf\big\rangle_{\spaceLM}=\sum_{i=1}^N\int_{\R^3}\epsL_i(\boldf)f_i \mu_i^{-1}\dd v
\]
\[
\hspace{1.7cm}=\sum_{i=1}^N\int_{\R^3}\epsL_i(\boldf)m_i \Big(a_i(t,x)+b(t,x)\cdot v+c(t,x)|v|^2\Big)\dd v =0,
\]
and, of course,

\[
\|\boldf-\pi_{\boldL}(\boldf)\|_{\spaceLvM}^2=0
\]
since $\boldf\in\textrm{Ker}(\boldL)$.
\fi

\vspace{2mm}
 The estimate holds for any $\varepsilon >0$ and Theorem \ref{MainResult} is then proved.

%$$$$$$$$$$$$$$$$$$$$$$$$$$$$$$$$$$$$$$$$  BIBLIOGRAPHY  $$$$$$$$$$$$$$$$$$$$$$$$$$$$$$$$$$$$$$$$$$$%

\nocite{*}
\bibliographystyle{plain}
\bibliography{Bibliografia}

\begin{thebibliography}{10}

\bibitem{AndAokPer}
P.~Andries, K.~Aoki, and B.~Perthame.
\newblock A consistent {BGK}-type model for gas mixtures.
\newblock {\em J. Statist. Phys.}, 106(5-6):993--1018, 2002.

\bibitem{BarMou}
C.~Baranger and C.~Mouhot.
\newblock Explicit spectral gap estimates for the linearized {B}oltzmann and
  {L}andau operators with hard potentials.
\newblock {\em Rev. Mat. Iberoamericana}, 21(3):819--841, 2005.

\bibitem{Bob1}
A.~V. Bobyl\"ev.
\newblock The method of the {F}ourier transform in the theory of the
  {B}oltzmann equation for {M}axwell molecules.
\newblock {\em Dokl. Akad. Nauk SSSR}, 225(6):1041--1044, 1975.

\bibitem{Bob2}
A.~V. Bobyl\"ev.
\newblock The theory of the nonlinear spatially uniform {B}oltzmann equation
  for {M}axwell molecules.
\newblock In {\em Mathematical physics reviews, {V}ol.\ 7}, volume~7 of {\em
  Soviet Sci. Rev. Sect. C Math. Phys. Rev.}, pages 111--233. Harwood Academic
  Publ., Chur, 1988.

\bibitem{BouGrePav}
L.~Boudin, B.~Grec, and V.~Pavan.
\newblock The {M}axwell-{S}tefan diffusion limit for a kinetic model of
  mixtures with general cross sections.
\newblock {\em Nonlinear Anal.}, 159:40--61, 2017.

\bibitem{BouGrePavSal}
L.~Boudin, B.~Grec, M.~Pavi\'c, and F.~Salvarani.
\newblock Diffusion asymptotics of a kinetic model for gaseous mixtures.
\newblock {\em Kinet. Relat. Models}, 6(1):137--157, 2013.

\bibitem{BouGreSal}
L.~Boudin, B.~Grec, and F.~Salvarani.
\newblock The {M}axwell-{S}tefan diffusion limit for a kinetic model of
  mixtures.
\newblock {\em Acta Appl. Math.}, 136:79--90, 2015.

\bibitem{BouDesLetPer}
J.-F. Bourgat, L.~Desvillettes, P.~Le~Tallec, and B.~Perthame.
\newblock Microreversible collisions for polyatomic gases and {B}oltzmann's
  theorem.
\newblock {\em European J. Mech. B Fluids}, 13(2):237--254, 1994.

\bibitem{Bri}
M.~Briant.
\newblock Stability of global equilibrium for the multi-species {B}oltzmann
  equation in {$L^\infty$} settings.
\newblock {\em Discrete Contin. Dyn. Syst.}, 36(12):6669--6688, 2016.

\bibitem{BriDau}
M.~Briant and E.~S. Daus.
\newblock The {B}oltzmann equation for a multi-species mixture close to global
  equilibrium.
\newblock {\em Arch. Ration. Mech. Anal.}, 222(3):1367--1443, 2016.

\bibitem{BriGuo}
M.~Briant and Y.~Guo.
\newblock Asymptotic stability of the {B}oltzmann equation with {M}axwell
  boundary conditions.
\newblock {\em J. Differential Equations}, 261(12):7000--7079, 2016.

\bibitem{BruPavSch}
S.~Brull, V.~Pavan, and J.~Schneider.
\newblock Derivation of a {BGK} model for mixtures.
\newblock {\em Eur. J. Mech. B Fluids}, 33:74--86, 2012.

\bibitem{Car}
T.~Carleman.
\newblock {\em Probl\`emes math\'ematiques dans la th\'eorie cin\'etique des
  gaz}.
\newblock Publ. Sci. Inst. Mittag-Leffler. 2. Almqvist \& Wiksells Boktryckeri
  Ab, Uppsala, 1957.

\bibitem{Cer}
C.~Cercignani.
\newblock {\em Mathematical methods in kinetic theory}.
\newblock Plenum Press, New York, second edition, 1990.

\bibitem{CerIllPul}
C.~Cercignani, R.~Illner, and M.~Pulvirenti.
\newblock {\em The mathematical theory of dilute gases}, volume 106 of {\em
  Applied Mathematical Sciences}.
\newblock Springer-Verlag, New York, 1994.

\bibitem{DauJunMouZam}
E.~S. Daus, A.~J\"ungel, C.~Mouhot, and N.~Zamponi.
\newblock Hypocoercivity for a linearized multispecies {B}oltzmann system.
\newblock {\em SIAM J. Math. Anal.}, 48(1):538--568, 2016.

\bibitem{DesMonSal}
L.~Desvillettes, R.~Monaco, and F.~Salvarani.
\newblock A kinetic model allowing to obtain the energy law of polytropic gases
  in the presence of chemical reactions.
\newblock {\em Eur. J. Mech. B Fluids}, 24(2):219--236, 2005.

\bibitem{GamPanVil}
I.~M. Gamba, V.~Panferov, and C.~Villani.
\newblock Upper {M}axwellian bounds for the spatially homogeneous {B}oltzmann
  equation.
\newblock {\em Arch. Ration. Mech. Anal.}, 194(1):253--282, 2009.

\bibitem{GarSanBre}
V.~Garz\'o, A.~Santos, and J.~J. Brey.
\newblock A kinetic model for a multicomponent gas.
\newblock {\em Phys. Fluids A}, 1(2):380--383, 1989.

\bibitem{Gra1}
H.~Grad.
\newblock Principles of the kinetic theory of gases.
\newblock In {\em Handbuch der {P}hysik (herausgegeben von {S}. {F}l\"ugge),
  {B}d. 12, {T}hermodynamik der {G}ase}, pages 205--294. Springer-Verlag,
  Berlin-G\"ottingen-Heidelberg, 1958.

\bibitem{Gra2}
H.~Grad.
\newblock Asymptotic theory of the {B}oltzmann equation.
\newblock {\em Phys. Fluids}, 6:147--181, 1963.

\bibitem{Gra3}
H.~Grad.
\newblock Asymptotic theory of the {B}oltzmann equation. {II}.
\newblock In {\em Rarefied {G}as {D}ynamics ({P}roc. 3rd {I}nternat. {S}ympos.,
  {P}alais de l'{UNESCO}, {P}aris, 1962), {V}ol. {I}}, pages 26--59. Academic
  Press, New York, 1963.

\bibitem{Gre}
D.~E. Greene.
\newblock Mathematical aspects of kinetic model equations for binary gas
  mixtures.
\newblock {\em J. Mathematical Phys.}, 16:776--782, 1975.

\bibitem{Guo}
Y.~Guo.
\newblock Decay and continuity of the {B}oltzmann equation in bounded domains.
\newblock {\em Arch. Ration. Mech. Anal.}, 197(3):713--809, 2010.

\bibitem{Hil}
D.~Hilbert.
\newblock {\em Grundz\"uge einer allgemeinen {T}heorie der linearen
  {I}ntegralgleichungen}.
\newblock Chelsea Publishing Company, New York, N.Y., 1953.

\bibitem{Kat}
T.~Kato.
\newblock {\em Perturbation theory for linear operators}.
\newblock Die Grundlehren der mathematischen Wissenschaften, Band 132.
  Springer-Verlag New York, Inc., New York, 1966.

\bibitem{Mor}
T.~F. Morse.
\newblock Kinetic model equations for a gas mixture.
\newblock {\em Phys. Fluids}, 7:2012--2013, 1964.

\bibitem{Mou1}
C.~Mouhot.
\newblock Explicit coercivity estimates for the linearized {B}oltzmann and
  {L}andau operators.
\newblock {\em Comm. Partial Differential Equations}, 31(7-9):1321--1348, 2006.

\bibitem{Mou2}
C.~Mouhot.
\newblock Rate of convergence to equilibrium for the spatially homogeneous
  {B}oltzmann equation with hard potentials.
\newblock {\em Comm. Math. Phys.}, 261(3):629--672, 2006.

\bibitem{MouNeu}
C.~Mouhot and L.~Neumann.
\newblock Quantitative perturbative study of convergence to equilibrium for
  collisional kinetic models in the torus.
\newblock {\em Nonlinearity}, 19(4):969--998, 2006.

\bibitem{MouStr}
C.~Mouhot and R.~M. Strain.
\newblock Spectral gap and coercivity estimates for linearized {B}oltzmann
  collision operators without angular cutoff.
\newblock {\em J. Math. Pures Appl. (9)}, 87(5):515--535, 2007.

\bibitem{RosMaz}
J.~Ross and P.~Mazur.
\newblock Some deductions from a formal statistical mechanical theory of
  chemical kinetics.
\newblock {\em J. Chem. Phys.}, 35:19--28, 1961.

\bibitem{RosSpi}
A.~Rossani and G.~Spiga.
\newblock A note on the kinetic theory of chemically reacting gases.
\newblock {\em Phys. A}, 272(3-4):563--573, 1999.

\bibitem{Sir}
L.~Sirovich.
\newblock Kinetic modeling of gas mixtures.
\newblock {\em Phys. Fluids}, 5:908--918, 1962.

\bibitem{UkaYan}
S.~Ukai and T.~Yang.
\newblock Mathematical {T}heory of {B}oltzmann {E}quation.
\newblock {\em {L}ecture {N}otes {S}er., vol. 8}, {L}iu {B}ie {J}u {C}enter of
  {M}athematical {S}ciences, {C}ity {U}niversity of {H}ong {K}ong, {H}ong
  {K}ong, 2006.

\bibitem{Vil}
C.~Villani.
\newblock A review of mathematical topics in collisional kinetic theory.
\newblock In {\em Handbook of mathematical fluid dynamics, {V}ol. {I}}, pages
  71--305. North-Holland, Amsterdam, 2002.

\end{thebibliography}

\noindent \scshape{A.B., M.B., B.G.: MAP5, CNRS UMR 8145, Sorbonne Paris Cit\ac{e}, Universit\ac{e} Paris Descartes, France}

\noindent \textit{E-mail address}: \textrm{\bf andrea.bondesan@parisdescartes.fr}

\noindent \textit{E-mail address}: \textrm{\bf marc.briant@parisdescartes.fr}

\noindent \textit{E-mail address}: \textrm{\bf berenice.grec@parisdescartes.fr}\\

\noindent \scshape{L.B.: Sorbonne Universit\ac{e}s, UPMC Univ Paris 06, CNRS, INRIA, Laboratoire Jacques-Louis Lions (LJLL CNRS UMR 7598), \ac{E}quipe-projet Reo, Paris, F-75005, France}

\noindent \textit{E-mail address}: \textrm{\bf laurent.boudin@upmc.fr}

\end{document}